\newtheorem{thm}{Theorem}
\newtheorem{lemma}[thm]{Lemma}
\newtheorem{cor}{Corollary}
\newtheorem{remark}{Remark}
\theoremstyle{definition}
\newcommand\YM{\ensuremath{Y(M)}}
\newcommand\ZM{\ensuremath{Z(M)}}
\newcommand\ZA{\ensuremath{Z({\cal A})}}
\newcommand\ZAc{\ensuremath{Z({\cal A}^c)}}
\newcommand\YA{\ensuremath{Y({\cal A})}}
\newcommand\YAc{\ensuremath{Y({\cal A}^c)}}
\newcommand\sA{\ensuremath{\sigma^2_{\cal A}}}
\newcommand\sAc{\ensuremath{\sigma^2_{{\cal A}^c}}}
\DeclarePairedDelimiter\parentheses{\lparen}{\rparen}
\newcommand{\entp}[1]{\operatorname{N} \parentheses*{#1}}
\DeclareFontFamily{U}{futm}{}
\DeclareFontShape{U}{futm}{m}{n}{
  <-> s * [.92] fourier-bb
  }{}
\DeclareSymbolFont{Ufutm}{U}{futm}{m}{n}
\DeclareSymbolFontAlphabet{\mathbb}{Ufutm}
\begin{document}

\title{Remote Source Coding under Gaussian Noise :\\Dueling Roles of Power and Entropy Power}

\author{Krishnan~Eswaran~and~Michael~Gastpar
\thanks{The work in this manuscript was supported in part by the U.S. National Science Foundation under award CCF-0347298 (CAREER), CNS-0326503, by a fellowship from the Vodafone U.S. Foundation (to K.E.), and by EPFL. The work in this manuscript was partially presented at the {\it 2005 IEEE International Symposium on Information Theory,} Adelaide, Australia, the {\it 39th Conference on Information Sciences and Systems,} The Johns Hopkins University, MD, USA,  the {\it 2006 IEEE Information Theory Workshop}, Punta del Este, Uruguay, and the {\it 2018 Workshop on Information Theory and Applications (ITA)}, San Diego, CA, USA.}
\thanks{K. Eswaran and M. Gastpar were with the Department of Electrical Engineering and Computer Sciences at the University of California, Berkeley. K. Eswaran is now with Google, Inc, and M. Gastpar is with the School of Computer and Communication Sciences, {\'E}cole Polytechnique F{\'e}d{\'e}rale de Lausanne (EPFL), Lausanne, Switzerland (email: michael.gastpar@epfl.ch).}
}


\maketitle

\begin{abstract}
The distributed remote source coding (so-called CEO) problem is studied in the case where the underlying source, not necessarily Gaussian, has finite differential entropy
and the observation noise is Gaussian.
The main result is a new lower bound for the sum-rate-distortion function under arbitrary distortion measures.
When specialized to the case of mean-squared error, it is shown that the bound exactly mirrors a corresponding upper bound,
except that the upper bound has the source power (variance) whereas the lower bound has the source entropy power.
Bounds exhibiting this pleasing duality of power and entropy power have been well known for direct and centralized source coding since Shannon's work.
While the bounds hold generally, their value is most pronounced when interpreted as a function of the number of agents in the CEO problem.
\end{abstract}

\begin{IEEEkeywords}
Source coding, CEO problem, entropy power, entropy power inequality, source--channel separation theorem, joint source--channel coding, rate loss
\end{IEEEkeywords}

\IEEEpeerreviewmaketitle

\section{Introduction}

In the CEO problem, there is an underlying source and $M$ encoders~\cite{GelfandP:79,FlynnG:87,bergerzhangvis-it96}. Each encoder gets a noisy observation of the underlying source. The encoders provide rate-limited descriptions of their noisy observations to a central decoder. The central decoder produces an approximation of the underlying source to the highest possible fidelity.
This work studies the special case where the observation noise is additive Gaussian and independent between different encoders.
When the underlying source is also Gaussian and the fidelity criterion is the mean-squared error, this problem is referred to as the {\it quadratic Gaussian CEO problem} and is well studied in the literature~\cite{viswanathan-it97,oohama-it98,oohama-it05,prabharakaran-isit04,TavildarVW:10}.
In the work presented here, we still consider additive Gaussian observation noises, but we allow the underlying source to be any continuous distribution constrained to having a finite differential entropy. We refer to this as the {\it AWGN CEO problem}.
The contributions of the work are the following:
\begin{itemize}
\item A new general lower bound is presented for the AWGN CEO problem with an arbitrary underlying source, not necessarily Gaussian, and subject to an arbitrary distortion measure. (Theorems~\ref{thm-CEO-aux} and~\ref{thm-AWGN-CEO-generallowerbound}.)
\item When specialized to the case of the mean-squared error distortion measure, the new lower bound is shown to closely match a known upper bound. In fact, both bounds assume the same shape, except that the lower bound has the entropy power whereas the upper bound has the source power (variance). This parallels the well-known Shannon lower bound for the standard rate-distortion function under mean-squared error. (Corollaries~\ref{Cor-CEO-aux-MMSE} and~\ref{thm-AWGN-CEO}.)
\item The strength of the new bounds is that they reflect the correct behavior as a function of the number of agents $M.$ This fact is leveraged and illustrated in two follow-up results. The first characterizes the rate loss in the CEO problem, {\it i.e.,} the rate penalty of distributed versus centralized encoding (Theorem~\ref{lemma-priceofdistributed}). The second pertains to a network joint source-channel coding problem (more specifically, a simple model of a sensor network), given in Theorem~\ref{thm-jscc}.
\end{itemize}
The underpinnings of the new bounds leverage and extend work by Oohama~\cite{oohama-it05}, by Wagner and Anantharam \cite{wagnerthesis-05,WagnerA:08} and by Courtade~\cite{Courtade:it18}.

We also note that there is a wealth of work about further versions of the CEO problem. Strategies are explored in~\cite{ChenB:it08}.
The case of so-called log-loss is addressed in~\cite{CourtadeW:14}.
There is also an interesting connection between the CEO problem and the problem of so-called ``nomadic'' communication and oblivious relaying, where one strategy is for intermediate nodes to compress their received signals~\cite{SanderovichSSK:it08,DBLP:journals/corr/AguerriZCS17}.

\subsection*{Notation}

All logarithms in this paper are natural, and $\log^+x = \max\{ 0, \log x \}.$
Random variables will be denoted by upper case letters $U.$
Random vectors will be denoted by boldface upper case letters $\mathbf{U} = (U_1, U_2, \cdots, U_M).$
For every subset ${\cal A} \subseteq \{1, 2, \cdots, M\},$ we will use $\mathbf{U}_{\cal A}$ to denote the subset of those components of $\mathbf{U}$ whose indices are in ${\cal A}.$
Moreover, ${\cal A}^c$ denotes the complement of the set ${\cal A}$ in $\{1, 2, \cdots, M\}.$
Given a random variable $X$ with density $f_X(x)$,
its variance is denoted by $\sigma_X^2,$
its differential entropy is $h(X) = -\int f_X(x) \log f_X(x) dx,$
and its entropy power is
\begin{align}
\entp{X} & = \frac{e^{2h(X)}}{2 \pi e},
\end{align}
and we recall that for Gaussian random variables $X,$ we have that $\entp{X} = \sigma_X^2.$
Finally, we will use the notation $X \leftrightarrow Y \leftrightarrow Z$ to denote Markov chains, i.e., the statement that $X$ and $Z$ are conditionally independent given $Y.$

\section{CEO Problem Statement}\label{sec-problem-CEO}

\subsection{The CEO Problem}

The CEO problem is a standard problem in multi-terminal information theory.
For completeness, we include a brief formal problem statement here.
An underlying source is modeled as a string $X^n$ of length $n$ of independent and identically distributed (i.i.d.) continuous random variables $\{X\}_{i=1}^n,$
following the terminology in~\cite[p.~243]{CoverThomas06}.
Throughout this study, we assume that the corresponding entropy power $\entp{X}$ is non-zero and finite.
The source $X$ is observed by $M$ encoding terminals through a broadcast channel $f_{Y_1, Y_2, \ldots, Y_M|X}(y_1, y_2, \ldots, y_M|x).$
The observation sequences $Y_m^n,$ $m = 1, 2, \ldots, M,$ are separately encoded with the goal of finding an estimate $\hat{X}^n$ of $X^n$ with distortion $D.$

\begin{figure}
  \begin{center}
  \setlength{\unitlength}{1.5pt}
  \begin{picture}(140,130)(130,0)
\thicklines
    \put (130, 60) {\line (1, 0) {15} }
    \put (133, 63) {\makebox (10,10) {$X^n$}}

    \put (145, 10) {\line (0, 1) {105} }

    \put (145, 115) {\vector (1, 0) {5} }
    \put (150, 115) {\line (1, 0) {20} }
    \put (155, 115) {\circle {10} }
    \put (155, 110) {\line (0, 1) {10} }
    \put (155, 130) {\vector (0, -1) {10} }
    \put (158, 122) {\makebox (10,10) {$Z_1^n$}}

    \put (145, 80) {\vector (1, 0) {5} }
    \put (150, 80) {\line (1, 0) {20} }
    \put (155, 80) {\circle {10} }
    \put (155, 75) {\line (0, 1) {10} }
    \put (155, 95) {\vector (0, -1) {10} }
    \put (158, 87) {\makebox (10,10) {$Z_2^n$}}

    \put (145, 10) {\vector (1, 0) {5} }
    \put (150, 10) {\line (1, 0) {20} }
    \put (155, 10) {\circle {10} }
    \put (155, 5) {\line (0, 1) {10} }
    \put (155, 25) {\vector (0, -1) {10} }
    \put (158, 17) {\makebox (10,10) {$Z_M^n$}}

    \put (170, 115) {\vector (1, 0) {19} }
    \put (175, 118) {\makebox (10,10) {$Y_1^n$}}
    \put (189, 105) {\framebox (25,20) {\sc enc 1} }
    \put (214, 115) {\line (1, 0) {3} }
    \put (219, 115) {\line (1, 0) {3} }
    \put (226, 115) {\line (1, 0) {3} }
    \put (228, 115) {\vector (1, 0) {5} }
    \put (214, 118) {\makebox (19,10) {$nR_1$}}

    \put (170, 80) {\vector (1, 0) {19} }
    \put (175, 83) {\makebox (10,10) {$Y_2^n$}}
    \put (189, 70) {\framebox (25,20) {\sc enc 2} }
    \put (214, 80) {\line (1, 0) {3} }
    \put (219, 80) {\line (1, 0) {3} }
    \put (226, 80) {\line (1, 0) {3} }
    \put (228, 80) {\vector (1, 0) {5} }
    \put (214, 83) {\makebox (19,10) {$nR_2$}}

    \put (200, 50) {\makebox (10,10) {$\vdots$}}

    \put (170, 10) {\vector (1, 0) {19} }
    \put (175, 13) {\makebox (10,10) {$Y_M^n$}}
     \put (189, 0) {\framebox (25,20) {\sc enc $M$} }
    \put (214, 10) {\line (1, 0) {3} }
    \put (219, 10) {\line (1, 0) {3} }
    \put (226, 10) {\line (1, 0) {3} }
    \put (228, 10) {\vector (1, 0) {5} }
    \put (214, 13) {\makebox (19,10) {$nR_M$}}

    \put (233, 0) {\framebox (20,125) {\sc dec} }
    \put (253, 60) {\vector (1, 0) {19} }
    \put (258, 63) {\makebox (10,10) {${\hat X}^n$}}
  \end{picture}
  \end{center}
\caption{The $M$-agent AWGN CEO problem. $X$ is an arbitrary source with variance (power) $\sigma_X^2$ (not necessarily Gaussian) and entropy power $\entp{X}.$ The observation noises $Z_i$ are independent and Gaussian.}
\label{fig-CEO}
\end{figure}
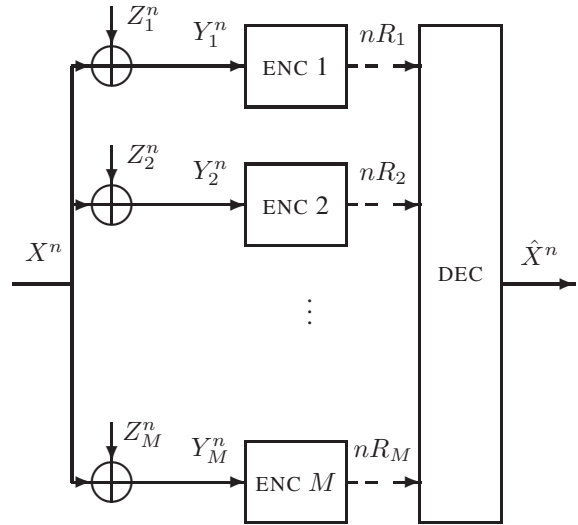

A $(2^{nR_1}, 2^{nR_2}, \cdots, 2^{nR_M}, n)$ code for the CEO problem consists of
\begin{itemize}
\item $M$ encoders, where encoder $m$ assigns an index $j_m(y_m^n) \in \{ 1, 2, \cdots, 2^{nR_m} \}$ to each sequence $y_m^n,$
for $m=1, 2, \cdots, M,$ and
\item a decoder that assigns an estimate $\hat{x}^n$ to each index tuple $(j_1, j_2, \cdots, j_M).$
\end{itemize}
A rate-distortion tuple $(R_1, R_2, \cdots, R_M, D)$ is said to be {\it achievable} if there exists a sequence of $(2^{nR_1}, 2^{nR_2}, \cdots, 2^{nR_M}, n)$ codes with
\begin{align}
   \limsup_{n\rightarrow\infty} {\mathbb E}\left[ \frac{1}{n} \sum_{i=1}^n d(X_i, \hat{X}_i) \right] & \le D,
\end{align}
where $d(\cdot, \cdot)$ is a {\it (single-letter) distortion measure} (see~\cite[p.~304]{CoverThomas06}).
In much of the present paper, we restrict attention to the case of the mean-squared error distortion measure, {\it i.e.,}
\begin{align}
  d(x_i, \hat{x}_i)  &= (x_i- \hat{x}_i)^2.
\end{align}
The rate-distortion region ${\cal R}_{CEO}(D)$ for the CEO problem is the closure of the set of all tuples $(R_1, R_2, \cdots, R_M)$ such that $(R_1, R_2, \cdots, R_M, D)$ is achievable.
In the present study, we are mostly interested in the minimum sum-rate, {\it i.e.,} the quantity defined as
\begin{align}
R_X^{CEO}(D) &= \min_{(R_1, R_2, \cdots, R_M) \in {\cal R}_{CEO}(D)}  \sum_{m=1}^M R_m. \label{eq-def-minsumrate}
\end{align}

\subsection{The special case $M=1$}\label{sec-def-remote}

In the special case $M=1,$ the CEO problem is referred to as the {\it remote} source coding problem.
This problem dates back to Dobrushin and Tsybakov~\cite{dobrushintsybakov-it62} as well as, for the case of additive noise, to Wolf and Ziv~\cite{wolfziv-it70}.
We will use the notation $R_{X}^R(D)$ in place of $R_X^{CEO}(D)$ in this case.
Here, it is well known that (see e.g.~\cite[Sec.~3.5]{Berger71})
\begin{align}
  R_{X}^R(D) &= \min_{f(\hat{x}|y) : {\mathbb E}[d(X, \hat{X})] \le D} I(Y; \hat{X}). \label{eq-def-Remote-RDF}
\end{align}

\subsection{The AWGN CEO Problem}

In much of the present study, we are concerned with the case where the source observation process is given by a {\it Gaussian}  broadcast channel. In that case, we have that
$Y_m = X + Z_m,$ for $m = 1, 2, \ldots, M,$ where $Z_m$ is distributed as a zero-mean Gaussian of variance $\sigma_m^2.$
We will refer to this as the {\it AWGN CEO problem,} illustrated in Figure~\ref{fig-CEO}.
Moreover, when the distortion measure of interest is the mean-squared error, we mimic standard terminology and refer to the {\it quadratic AWGN CEO problem.}

For the AWGN CEO problem, it will be convenient to use the following shorthand. For any subset ${\cal A} \subseteq \{1, 2, \ldots, M\},$
the sufficient statistic for $X$ given $\{ Y_i \}_{i \in {\cal A}}$ can be expressed as
\begin{align}
\YA &= \frac{1}{|{\cal A}|}\sum_{i \in {\cal A}} \frac{\sA}{\sigma_{Z_i}^2} Y_i
\\
&=  X+\ZA, \label{eq-def-YA}
\end{align}
where 
\begin{align}
\ZA & = \frac{1}{|{\cal A}|}\sum_{i \in {\cal A}} \frac{\sA}{\sigma_{Z_i}^2} Z_i \label{eq-def-ZA}
\end{align}
is a zero-mean Gaussian random variable of variance $\sA/|{\cal A}|,$
and $\sA$ denotes the harmonic mean of the noise variances in the set ${\cal A},$ that is,
\begin{align}
\sA & = \left(\frac{1}{|{\cal A}|}\sum_{i \in {\cal A}}\frac{1}{\sigma_{Z_i}^2} \right)^{-1}. \label{eq-def-sA}
\end{align}
In the special case where ${\cal A} =  \{ 1, 2, \ldots, M\},$ we will use the notation
\begin{align}
\YM &= \frac{1}{M}\sum_{i=1}^M \frac{\sigma_{\cal M}^2}{\sigma_{Z_i}^2} Y_i
\\
&=  X+\ZM, \label{eq-def-YM}
\end{align}
where $\sigma_{\cal M}^2$ denotes the harmonic mean of all the noise variances and
\begin{align}
\ZM & = \frac{1}{M}\sum_{i=1}^M \frac{\sigma_{\cal M}^2}{\sigma_{Z_i}^2} Z_i, \label{eq-def-ZM}
\end{align}
respectively. Hence, $\ZM$ is a zero-mean Gaussian random variable of variance $\sigma_{\cal M}^2/M.$

\section{The Shannon Lower Bound and Its Extensions}\label{sec-known}

The Shannon lower bound concerns the rate-distortion function $R_X(D)$ for an arbitrary (not necessarily Gaussian) source $X$ subject to mean-squared error distortion. It states that
\begin{align}
R_{X}(D) & \ge \frac{1}{2} \log^+   \frac{\entp{X}}{D}.\label{eq:direct_lower_bd}
\end{align}
At the same time, a maximum entropy argument provides an upper bound to the same rate-distortion function:
\begin{align}
R_{X}(D) & \leq \frac{1}{2} \log^+ \frac{\sigma_X^2}{D}. \label{eq:direct_upper_bd}
\end{align}
These results date back to~\cite{shannon-ire59} (see also~{\cite[Eqns.~(4.3.32) and~(4.3.42)]{Berger71} or~\cite[p.~338]{CoverThomas06}}).
Part of their appeal is the interesting duality played by the source power and its entropy power.
This also directly implies their tightness in the case where the underlying source $X$ is Gaussian, since power and entropy-power are equal in that case.
As a side note, tangential to the discussion presented here, we point out that the (generalized) Shannon lower bound is not generally tight for Gaussian {\it vector} sources, see e.g.~\cite{GyorgyLZ:it99}.

One can extend this result rather directly to the case of the remote rate-distortion function, {\it i.e.,} the CEO problem with $M=1,$ as defined above in Section~\ref{sec-def-remote}.
Specifically, letting $V={\mathbb E}[X|Y],$ the remote rate-distortion function subject to mean-squared error satisfies the bounds (see Appendix~\ref{App-proofs-known})
\begin{align}
 \frac{1}{2} \log^+ \frac{\entp{V}}{D-D_0} \le  R_X^R(D)  \le \frac{1}{2} \log^+ \frac{\sigma_V^2}{D-D_0}, \label{eqn-remote-SLB}
\end{align}
for $D > D_0,$ where $D_0 = {\mathbb E}\left[ \left(X - V\right)^2  \right].$
For the special case of additive source observation noise, that is, $Y = X+Z,$ where $X$ and $Z$ are independent, one can obtain a more explicit pair of bounds by observing that (see Appendix~\ref{App-proofs-known})
\begin{align}
 \frac{\entp{X}\entp{Z}}{\entp{Y}} \le D_0 \le \frac{\sigma_X^2\sigma_Z^2}{\sigma_Y^2}. \label{eqn-estimation}
\end{align}
Combining Inequalities~\eqref{eqn-remote-SLB} and~\eqref{eqn-estimation}, we obtain the slightly weakened lower bound, for $D > \entp{X}\entp{Z}/\entp{Y},$
 \begin{align}
R_X^R(D) & \ge \frac{1}{2} \log^+ \frac{\entp{V}}{D}  + \frac{1}{2} \log^+  \frac{ \entp{Y}}{\entp{Y}-\frac{\entp{X}}{D}\entp{Z} } \label{eqn-remote-SLB-explicit-lower}
\end{align}
and the upper bound, for  $D > \sigma_X^2\sigma_Z^2/\sigma_{Y}^2,$
 \begin{align}
 R_X^R(D) &\le \frac{1}{2}\log^+\frac{\sigma_V^2}{D}  + \frac{1}{2}\log^+ \frac{\sigma_Y^2}{\sigma_Y^2 - \frac{\sigma_X^2}{D}{\sigma_{{Z}}^2}}. \label{eqn-remote-SLB-explicit-upper} \end{align}

A second type of lower bounds of a similar flavor can be derived from entropy power inequalities (EPI).
For these bounds to work, we restrict attention to the case of the {\it AWGN CEO problem} as defined above, {\it i.e.,}  the scenario where the underlying source $X$ is observed under independent zero-mean {\it Gaussian}  noise $Z$ of variance $\sigma_Z^2.$
Again, we let $Y=X+Z$ be the noisy source observation. Moreover, let us consider an {\it arbitrary} distortion measure,
and let $R_X(D)$ denote the (regular) rate-distortion function of the source $X$ subject to that distortion measure.
Then, a lower bound to the remote rate-distortion function subject to that arbitrary distortion measure is (see Appendix~\ref{App-proofs-known})
\begin{align}
R_{X}^R(D) \geq R_{X}(D)+ \frac{1}{2} \log^+ \frac{\entp{X}}{\entp{Y}-\sigma_{Z}^2 e^{2 R_{X}(D)}} , \label{eq-EPI-general}
\end{align}
for $D$ satisfying $\sigma_Z^2 e^{2 R_X(D)} < \entp{Y}.$

Moreover, if the following inequality can be satisfied
\begin{equation}
 \min_{g} {\mathbb E}[ d\left(X,g(X+Z+W)\right)] \leq D,  \label{eq:ineq_snr_so}
\end{equation}
where $W$ is an independent zero-mean Gaussian random variable with variance $\sigma_X^2/(e^{2r}-1)-\sigma_Z^2$, and the minimum is over all real-valued, measurable functions $g(\cdot)$, then for $0 \leq r \leq \frac{1}{2} \log (1+\sigma_X^2/\sigma_Z^2)$, an upper bound is (see Appendix~\ref{App-proofs-known})
\begin{align}
R_X^R&(D) \leq r + \frac{1}{2}\log^+ \frac{\sigma_X^2}{\sigma_Y^2 -
\sigma_Z^2 e^{2r}}  . \label{eq:remote_rd_upperbd_thm_so}
\end{align}
When we restrict attention to the case of mean-squared error distortion, we can obtain the following more explicit form for the lower bound, for $D > \entp{X}\sigma_Z^2/\entp{Y},$
\begin{align}
R_X^R(D)&\ge \frac{1}{2} \log^+ \frac{\entp{X}}{D}  + \frac{1}{2} \log^+  \frac{ \entp{X}}{\entp{Y}-\frac{\entp{X}}{D}\sigma_{Z}^2 }, \label{eq-EPI-explicit-lower}
\end{align}
and for the upper bound, for $D > \sigma_X^2\sigma_Z^2/\sigma_{{Y}}^2,$
\begin{align}
  R_X^R(D) &\le \frac{1}{2}\log^+\frac{\sigma_X^2}{D}  + \frac{1}{2}\log^+ \frac{\sigma_X^2}{\sigma_Y^2 - \frac{\sigma_X^2}{D}{\sigma_{{Z}}^2}}, \label{eq-EPI-explicit-upper}
\end{align}
Proofs of Inequalities~\eqref{eq-EPI-explicit-lower}-\eqref{eq-EPI-explicit-upper} are provided in Appendix~\ref{App-proofs-known}.
It is tempting to compare the lower bounds in Inequalities~\eqref{eqn-remote-SLB-explicit-lower} and~\eqref{eq-EPI-explicit-lower}, but there does not appear to be a simple relationship.

\section{Main Results}\label{sec-main}

\subsection{General Lower Bound}

Our main result is the following lower bound:

\begin{thm}\label{thm-CEO-aux}
For the $M$-agent AWGN CEO problem with an arbitrary continuous underlying source $X,$ constrained to having finite differential entropy, subject to an arbitrary distortion measure $d(\cdot,\cdot),$
if a rate-distortion tuple $(R_1, R_2, \cdots, R_M, D)$ is achievable, {\it i.e.,} if it satisfies
$(R_1, R_2, \ldots, R_M) \in {\cal R}_{CEO}(D),$
then there must exist non-negative real numbers $\{r_1, r_2, \ldots, r_M\}$ such that
for every (strict) subset ${\cal A} \subset \{ 1, 2, \ldots, M \},$
we have
\begin{align}
\sum_{i \in {\cal A}}R_i &\geq R_X(D) \nonumber \\
 &    - \frac{1}{2} \log \left( \frac{\entp{\YAc}}{\sAc/|{\cal A}^c|}- {\entp{X}}{}\sum_{i \in {\cal A}^c} \frac{e^{-2r_i}}{\sigma_{Z_i}^2} \right) +  \sum_{i \in {\cal A}}  r_i,\label{eq-thm-CEO-aux}
\end{align}
and for the full set ${\cal A} = \{ 1, 2, \ldots, M \},$ we have $\sum_{i \in {\cal A}}R_i \geq   R_X(D)  +  \sum_{i \in {\cal A}}  r_i,$
where $\YA$ and $\sA$ are defined in Equations~\eqref{eq-def-YA} and~\eqref{eq-def-sA}, respectively, and $R_X(D)$ denotes the (regular) rate-distortion function of the source $X$ with respect to the distortion measure $d(\cdot,\cdot).$
\end{thm}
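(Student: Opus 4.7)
The plan is to follow a Berger--Tung-style multi-letter converse enhanced by an entropy power inequality (EPI) step that reflects the Gaussianity of the observation noises, in the spirit of Oohama's argument for the quadratic Gaussian CEO. Let $U_i = J_i(Y_i^n)$ denote the description emitted by encoder $i$, so that $nR_i \geq H(U_i)$. Introduce the parameters
\[
r_i \;=\; \frac{1}{n}\, I(Y_i^n;\,U_i \mid X^n), \qquad i=1,\ldots,M,
\]
which are non-negative and measure how much information $U_i$ retains about the local noise $Z_i^n = Y_i^n - X^n$. Because the observations are conditionally independent given $X^n$, the descriptions are also conditionally independent given $X^n$, which yields the clean identity $I(\mathbf{Y}_{\cal A}^n;\mathbf{U}_{\cal A}\mid X^n) = n\sum_{i\in{\cal A}} r_i$ for every ${\cal A}$.

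Combining subadditivity $\sum_{i\in{\cal A}} H(U_i) \geq H(\mathbf{U}_{\cal A})$ with the chain rule $H(\mathbf{U}_{\cal A}) = I(X^n;\mathbf{U}_{\cal A}) + I(\mathbf{Y}_{\cal A}^n;\mathbf{U}_{\cal A}\mid X^n)$ (using that $\mathbf{U}_{\cal A}$ is a function of $\mathbf{Y}_{\cal A}^n$) produces the master inequality $n\sum_{i\in{\cal A}} R_i \geq I(X^n;\mathbf{U}_{\cal A}) + n\sum_{i\in{\cal A}} r_i$. For the full set ${\cal A}=\{1,\ldots,M\}$, $\hat{X}^n$ is a function of $\mathbf{U}$ meeting the distortion constraint, so the standard rate-distortion converse together with convexity of $R_X(\cdot)$ yields $\tfrac{1}{n}I(X^n;\mathbf{U})\geq R_X(D)$, finishing that case directly. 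For a strict subset ${\cal A}$, the idea is to decompose $I(X^n;\mathbf{U}_{\cal A}) = I(X^n;\mathbf{U}) - I(X^n;\mathbf{U}_{{\cal A}^c}\mid \mathbf{U}_{\cal A})$, apply the same converse to the first term, and use the cross-section conditional independence $\mathbf{U}_{\cal A}\ci \mathbf{U}_{{\cal A}^c}\mid X^n$ followed by data processing through the sufficient statistic $\YAc^n$ for $X^n$ in $\mathbf{Y}_{{\cal A}^c}^n$ to reduce the task to a lower bound on $h(\YAc^n\mid \mathbf{U}_{{\cal A}^c})$.

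The technical core, and the step I expect to be the main obstacle, is precisely this lower bound on $h(\YAc^n \mid \mathbf{U}_{{\cal A}^c})$. Writing $\YAc = X + \sum_{i\in{\cal A}^c} d_i Z_i$ with $d_i = \sAc/(|{\cal A}^c|\,\sigma_{Z_i}^2)$, and using that the $Z_i^n$ remain conditionally independent given $(X^n,\mathbf{U}_{{\cal A}^c})$ (a direct consequence of separate encoding and independent noises) with per-symbol entropy power $\entp{Z_i^n\mid X^n,U_i} = \sigma_{Z_i}^2 e^{-2 r_i}$ (which is an identity, not an inequality, by the definition of $r_i$), the scaled conditional EPI yields
\[
\entp{\ZA^n \mid X^n,\mathbf{U}_{{\cal A}^c}} \;\geq\; \sum_{i\in{\cal A}^c} d_i^2\, \sigma_{Z_i}^2 e^{-2 r_i}\quad\text{with }{\cal A}\text{ replaced by }{\cal A}^c.
\]
The delicate point is that the standard conditional EPI cannot be applied directly to $X^n+\ZA^n$ conditioned on $\mathbf{U}_{{\cal A}^c}$, since the two summands become correlated once we condition on the encoder outputs. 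A more refined EPI variant—of the kind introduced by Oohama and subsequently sharpened by Wagner--Anantharam and by Courtade—must be invoked to fuse the ``noise-side'' contribution above with the entropy power $\entp{X}$ of the source into the precise linear combination $\entp{\YAc}/(\sAc/|{\cal A}^c|) - \entp{X}\sum_{i\in{\cal A}^c} e^{-2r_i}/\sigma_{Z_i}^2$ that appears inside the logarithm. Once this single-block bound is in place, the i.i.d.\ structure of $X^n$, convexity of $R_X(\cdot)$, and the per-symbol normalization of $\entp{\cdot}$ carry out the single-letterization and deliver the stated inequality.
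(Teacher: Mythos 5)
Your overall skeleton is sound, and it is in fact a legitimately different route from the paper's: you work directly with the multi-letter quantities $r_i = \frac{1}{n}I(Y_i^n;U_i\mid X^n)$ and the identity $n\sum_{i\in{\cal A}}R_i \ge I(X^n;\mathbf{U}_{\cal A}) + n\sum_{i\in{\cal A}}r_i$, whereas the paper first invokes the Wagner--Anantharam single-letter outer bound (with auxiliary time-sharing variables $W,T$) and then proves everything at the single-letter level. Your reductions are correct as far as they go: the chain-rule/subadditivity step, the full-set case via the standard converse and convexity of $R_X(\cdot)$, and the bound $I(X^n;\mathbf{U}_{{\cal A}^c}\mid\mathbf{U}_{\cal A}) \le I(X^n;\mathbf{U}_{{\cal A}^c})$ using $\mathbf{U}_{\cal A} \ci \mathbf{U}_{{\cal A}^c} \mid X^n$ all check out, as does the conditional-EPI evaluation of $h(\YAc^n\mid \mathbf{U}_{{\cal A}^c},X^n)$ via the exact identity $h(Y_i^n\mid U_i,X^n) = h(Z_i^n) - nr_i$.

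However, there is a genuine gap exactly where you flag "the main obstacle": you never actually prove the inequality
\begin{align}
e^{\frac{2}{n} I(X^n;\mathbf{U}_{{\cal A}^c})} & \le \frac{\entp{\YAc}}{\sAc/|{\cal A}^c|} - \entp{X}\sum_{i\in{\cal A}^c}\frac{e^{-2r_i}}{\sigma_{Z_i}^2},
\end{align}
which is the entire technical content of the theorem; you only assert that "a more refined EPI variant must be invoked." Your diagnosis is right that ordinary data processing plus a lower bound on $h(\YAc^n\mid\mathbf{U}_{{\cal A}^c})$ cannot produce the subtracted $\entp{X}$ term. The tool that closes the gap is Courtade's strong entropy power inequality: for the Markov chain $X^n \to \YAc^n = X^n + \ZAc^n \to \mathbf{U}_{{\cal A}^c}$ with $\ZAc^n$ Gaussian, it gives $e^{\frac{2}{n}h(\YAc^n)}e^{-\frac{2}{n}I(X^n;\mathbf{U}_{{\cal A}^c})} \ge e^{\frac{2}{n}h(X^n)}e^{-\frac{2}{n}I(\YAc^n;\mathbf{U}_{{\cal A}^c})} + e^{\frac{2}{n}h(\ZAc^n)}$. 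One must then expand $I(\YAc^n;\mathbf{U}_{{\cal A}^c})$ through $h(\YAc^n\mid\mathbf{U}_{{\cal A}^c},X^n)$, which reintroduces a factor $e^{-\frac{2}{n}I(X^n;\mathbf{U}_{{\cal A}^c})}$ on the right-hand side, and \emph{solve} the resulting implicit inequality for $e^{\frac{2}{n}I(X^n;\mathbf{U}_{{\cal A}^c})}$ --- that algebraic rearrangement is precisely what produces the "linear combination" inside the logarithm, and it is absent from your writeup. (In the paper this is the generalized Oohama lemma, proved conditionally on $W=w,T=t$ and then averaged via two applications of Jensen; in your multi-letter variant you would instead need the vector form of the strong EPI together with $h(X^n)=nh(X)$ from the i.i.d.\ assumption.) Until that lemma is stated and proved, the proposal is an outline rather than a proof.
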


The proof of this theorem is given in Appendix~\ref{App-proof-main}.

\begin{remark}
Note that the argument inside the logarithm in Equation~\eqref{eq-thm-CEO-aux} is lower bounded by $1$ for all non-negative choices of $r_i,$ as explained in the proof in Appendix~\ref{App-proof-main}, making the expression well-defined.
\end{remark}

In the next corollary, we specialize Theorem~\ref{thm-CEO-aux} to the case of the mean-squared error distortion measure, a case for which we have a closely matching upper bound.

\begin{cor}\label{Cor-CEO-aux-MMSE}
For the $M$-agent AWGN CEO problem with an arbitrary continuous underlying source $X,$ constrained to having finite differential entropy, subject to the mean-squared error distortion measure,
if a rate-distortion tuple $(R_1, R_2, \cdots, R_M, D)$ is achievable, {\it i.e.,} if it satisfies
$(R_1, R_2, \ldots, R_M) \in {\cal R}_{CEO}(D),$
then there must exist non-negative real numbers $\{r_1, r_2, \ldots, r_M\}$ such that
for every (strict) subset ${\cal A} \subset \{ 1, 2, \ldots, M \},$
we have
\begin{align}
\sum_{i \in {\cal A}}R_i &\geq   \frac{1}{2} \log^+ \frac{\entp{X}}{D}  \nonumber \\
& \,\,\, - \frac{1}{2} \log \left( \frac{\entp{\YAc}}{\sAc/|{\cal A}^c|}- {\entp{X}}{}\sum_{i \in {\cal A}^c} \frac{e^{-2r_i}}{\sigma_{Z_i}^2} \right) +  \sum_{i \in {\cal A}}  r_i,\label{eq-thm-CEO-aux-MMSE}
\end{align}
and for the full set ${\cal A} = \{ 1, 2, \ldots, M \},$ we have $\sum_{i \in {\cal A}}R_i \geq   \frac{1}{2} \log^+ \frac{\entp{X}}{D}  +  \sum_{i \in {\cal A}}  r_i,$
where $\YA$ and $\sA$ are defined in Equations~\eqref{eq-def-YA} and~\eqref{eq-def-sA}, respectively.

For achievability, if there exist non-negative real numbers $\{r_1, r_2, \ldots, r_M\}$ such that
for every (strict) subset ${\cal A} \subset \{ 1, 2, \ldots, M \},$ we have
\begin{align}
\sum_{i \in {\cal A}}R_i &\leq   \frac{1}{2} \log^+ \frac{\sigma_X^2}{D}  \nonumber \\
&  \,\,\, - \frac{1}{2} \log \left( \frac{\sigma_{\YAc}^2}{\sAc/|{\cal A}^c|}- {\sigma_{X}^2}{}\sum_{i \in {\cal A}^c} \frac{e^{-2r_i}}{\sigma_{Z_i}^2} \right) +  \sum_{i \in {\cal A}}  r_i,\label{eq-thm-CEO-ach-Oohama}
\end{align}
and for the full set ${\cal A} = \{ 1, 2, \ldots, M \},$ we have $\sum_{i \in {\cal A}}R_i \leq   \frac{1}{2} \log^+ \frac{\sigma_X^2}{D}  +  \sum_{i \in {\cal A}}  r_i,$
then we have that $(R_1, R_2, \ldots, R_M) \in {\cal R}_{CEO}(D).$
\end{cor}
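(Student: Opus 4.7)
The converse inequality~\eqref{eq-thm-CEO-aux-MMSE} follows from Theorem~\ref{thm-CEO-aux} by a one-line substitution. Under mean-squared error distortion, Shannon's lower bound~\eqref{eq:direct_lower_bd} gives $R_X(D) \geq \tfrac{1}{2}\log^+ \entp{X}/D$; plugging this into~\eqref{eq-thm-CEO-aux} in place of $R_X(D)$ yields the strict-subset bound, and the same substitution handles the full-set case $\mathcal{A} = \{1,\ldots,M\}$.

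For the achievability half, the plan is to invoke the Berger--Tung inner bound for distributed lossy source coding, specialized to the CEO setting with Gaussian test channels, following the analysis Oohama used for the quadratic Gaussian CEO. The test channels are $U_m = Y_m + W_m$, where the $W_m$ are mutually independent, zero-mean Gaussian, independent of $(X, Z_1, \ldots, Z_M)$, with variances $\tau_m$ that I would place in one-to-one correspondence with the free parameters $r_m$. The decoder outputs the MMSE estimate $\hat{X} = \mathbb{E}[X \mid U_1, \ldots, U_M]$. The Berger--Tung rate constraints take the form $\sum_{i \in \mathcal{A}} R_i \geq I(Y_\mathcal{A}; U_\mathcal{A} \mid U_{\mathcal{A}^c})$. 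Since each $U_i$ is, by construction, a function of $Y_i$ and an independent Gaussian only, the conditional entropy $h(U_\mathcal{A} \mid Y_\mathcal{A}, U_{\mathcal{A}^c})$ collapses to $\sum_{i \in \mathcal{A}} h(W_i)$, a purely Gaussian quantity. The remaining term $h(U_\mathcal{A} \mid U_{\mathcal{A}^c})$ is upper bounded, via the maximum entropy principle, by the Gaussian entropy with the same conditional covariance; crucially, this covariance depends on the source law only through $\sigma_X^2$, which is exactly where $\sigma_X^2$ (as opposed to $\entp{X}$) enters. The MSE incurred by $\hat{X}$ is likewise upper bounded by the linear MMSE, which also depends on $X$ only through $\sigma_X^2$, giving the distortion constraint.

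The main obstacle will be the algebraic matching step: choosing the parametrization $\tau_m \leftrightarrow r_m$ so that the Gaussian-covariance evaluations of both the mutual informations $I(Y_\mathcal{A}; U_\mathcal{A} \mid U_{\mathcal{A}^c})$ and the resulting MSE reproduce the precise form on the right-hand side of~\eqref{eq-thm-CEO-ach-Oohama}, including the full-set case. This is essentially the bookkeeping of Oohama's quadratic Gaussian CEO calculation and introduces no new conceptual ingredient; the only novelty is the maximum-entropy step absorbing the non-Gaussianity of $X$. At a structural level, the asymmetry between converse and achievability --- $\entp{X}$ versus $\sigma_X^2$ --- reflects exactly this split in tools: maximum entropy delivers an \emph{upper} bound on differential entropy through $\sigma_X^2$ and is therefore the right instrument for achievability, whereas the Shannon lower bound delivers a \emph{lower} bound on $R_X(D)$ through $\entp{X}$ and is therefore the right instrument for the converse.
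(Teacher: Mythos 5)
Your proposal matches the paper's proof: the converse is obtained exactly as you say, by substituting the Shannon lower bound $R_X(D)\ge\frac{1}{2}\log^+\entp{X}/D$ into Theorem~\ref{thm-CEO-aux}, and the achievability part is likewise argued via the Berger--Tung region with Gaussian test channels $U_i=Y_i+V_i$ following Oohama's quadratic Gaussian CEO analysis, with the distortion controlled by the linear MMSE estimator and the rates by maximum-entropy bounds that depend on $X$ only through $\sigma_X^2$. The only detail the paper adds that you leave implicit is the observation that Oohama's Markov lemma survives the non-Gaussianity of $X$ because, conditioned on the source sequence, the observations and auxiliaries remain Gaussian; otherwise the two arguments coincide, including the level of reliance on Oohama's bookkeeping for the general-subset rate constraints.
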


For the proof of this corollary, we note that Inequality~\eqref{eq-thm-CEO-aux-MMSE} follows directly by combining Theorem~\ref{thm-CEO-aux} with the Shannon lower bound, Inequality~\eqref{eq:direct_lower_bd}.
The proof of the achievability part, Inequality~\eqref{eq-thm-CEO-ach-Oohama}, follows from the work of Oohama~\cite{oohama-it98,oohama-it05}. We briefly comment on this in Appendix~\ref{proof:thm-AWGN-CEO-upper}.

Comparing Inequalities~\eqref{eq-thm-CEO-aux-MMSE} and~\eqref{eq-thm-CEO-ach-Oohama}, 
we observe a pleasing duality of the source power and its entropy power: to go from the lower bound to the upper bound, it suffices to replace all entropy powers by the corresponding power (variance) of the same random variable.
This fact directly implies tightness for the case where the underlying source is Gaussian, which of course is well known~\cite{oohama-it05}.
The bounds also imply that for fixed source entropy power, the Gaussian is a best-case source,
and for fixed source power (variance), it is a worst-case source.

The same kind of duality can be observed in the Shannon lower bound in Inequalities~\eqref{eq:direct_lower_bd}-\eqref{eq:direct_upper_bd}.
It also appears in the extensions given in Inequality~\eqref{eqn-remote-SLB}, in Inequalities~\eqref{eqn-remote-SLB-explicit-lower}-\eqref{eqn-remote-SLB-explicit-upper}, and again in Inequalities~\eqref{eq-EPI-explicit-lower}-\eqref{eq-EPI-explicit-upper}.

\subsection{Sum-rate Lower Bound For Equal Noise Variances}

From Theorem~\ref{thm-CEO-aux}, we can obtain the following more explicit bound on the sum rate in the case when all observation noise variances are equal:

\begin{thm}\label{thm-AWGN-CEO-generallowerbound}
For the $M$-agent AWGN CEO problem with an arbitrary continuous underlying source $X,$ constrained to having finite differential entropy, with observation noise variance $\sigma_{Z_m}^2=\sigma_Z^2,$ for $m=1, 2, \ldots, M,$ and subject to an arbitrary distortion measure $d(\cdot,\cdot),$ the sum-rate distortion function is lower bounded by
\begin{align}
R_X^{CEO}(D) &\geq R_X(D) + \frac{M}{2} \log^+ \frac{M\entp{X}}{M\entp{\YM}  -  \sigma_{Z}^2 e^{2 R_X(D)} },
\label{eq:sum_rate_lb}
\end{align}
for $D$ satisfying $\sigma_Z^2 e^{2 R_X(D)} < M\entp{\YM},$ where $Y(M)$ is defined in Equation~\eqref{eq-def-YM},
and $R_X(D)$ denotes the (regular) rate-distortion function of the source $X$ with respect to the distortion measure $d(\cdot,\cdot).$
\end{thm}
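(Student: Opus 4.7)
The plan is to deduce Theorem~\ref{thm-AWGN-CEO-generallowerbound} from Theorem~\ref{thm-CEO-aux} by invoking only two of the many subset constraints --- the empty set and the full set --- and then eliminating the auxiliary parameters $r_1,\ldots,r_M$ via the arithmetic-geometric mean inequality. The equal-noise-variance hypothesis is precisely what makes this two-constraint reduction clean and tight.

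First, I would take $\mathcal{A} = \{1,\ldots,M\}$ in Theorem~\ref{thm-CEO-aux}, which immediately yields the clean relation
\begin{align}
\sum_{i=1}^M R_i \,\geq\, R_X(D) + \sum_{i=1}^M r_i. \label{eq-plan-full}
\end{align}
Separately, I would apply the theorem at $\mathcal{A} = \emptyset$, so that $\mathcal{A}^c = \{1,\ldots,M\}$ and, using $\sigma_{Z_i}^2 = \sigma_Z^2$ for all $i$, one has $\sAc = \sigma_Z^2$, $|\mathcal{A}^c| = M$, and $Y(\mathcal{A}^c) = \YM$. Substituting these identifications into \eqref{eq-thm-CEO-aux} and rearranging collapses the empty-set constraint to the single scalar inequality
\begin{align}
\sum_{i=1}^M e^{-2 r_i} \,\leq\, \frac{M\,\entp{\YM} - \sigma_Z^2 e^{2 R_X(D)}}{\entp{X}}, \label{eq-plan-empty}
\end{align}
whose right-hand side is positive exactly under the hypothesis of the present theorem.

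The decisive step is then AM-GM applied to $\{e^{-2 r_i}\}_{i=1}^M$: since $\tfrac{1}{M}\sum_i e^{-2 r_i} \geq \bigl(\prod_i e^{-2 r_i}\bigr)^{1/M} = e^{-\frac{2}{M}\sum_i r_i}$, the bound~\eqref{eq-plan-empty} forces
\begin{align}
\sum_{i=1}^M r_i \,\geq\, \frac{M}{2}\log \frac{M\,\entp{X}}{M\,\entp{\YM} - \sigma_Z^2 e^{2 R_X(D)}}.
\end{align}
Combining with \eqref{eq-plan-full} produces the announced bound, and the outer $\log^+$ in \eqref{eq:sum_rate_lb} simply accommodates the trivial regime in which the right-hand side of the displayed bound on $\sum r_i$ is nonpositive: there one sets $r_i = 0$ for all $i$, reducing to the obviously valid $\sum R_i \geq R_X(D)$.

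I do not anticipate a substantive obstacle: once Theorem~\ref{thm-CEO-aux} is granted, the remaining work is purely algebraic. The only care required is (i) verifying that $\mathcal{A}=\emptyset$ is indeed admissible as a ``strict subset'' in Theorem~\ref{thm-CEO-aux}, so that \eqref{eq-plan-empty} is a genuine necessary condition for achievability, and (ii) checking positivity of the logarithm's argument in~\eqref{eq-thm-CEO-aux} --- a point already flagged by the remark following Theorem~\ref{thm-CEO-aux} and which here dovetails precisely with the hypothesis $\sigma_Z^2 e^{2 R_X(D)} < M \entp{\YM}$.
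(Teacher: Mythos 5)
Your proposal is correct and follows essentially the same route as the paper's proof in Appendix B-B: specialize Theorem~\ref{thm-CEO-aux} to $\mathcal{A}=\emptyset$ and $\mathcal{A}=\{1,\ldots,M\}$, then eliminate the $r_i$ via $\frac{1}{M}\sum_i e^{-2r_i} \ge e^{-\frac{2}{M}\sum_i r_i}$ (your AM--GM step is exactly the paper's Jensen step, applied to the convex function $x\mapsto e^{-2x}$). The two points of care you flag are also handled identically in the paper, which explicitly invokes the empty set and restricts to $D$ with $\sigma_Z^2 e^{2R_X(D)} < M\entp{\YM}$.
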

The proof of this theorem is given in Appendix~\ref{proof:thm-AWGN-CEO-lower}.

When we further specialize to the case of the mean-squared error distortion measure, then our lower bound takes the same shape as a well-known achievable coding strategy, except that the lower bound has entropy powers where the upper bound has powers (variances). Specifically, we have the following result:

\begin{cor}\label{thm-AWGN-CEO}
For the $M$-agent AWGN CEO problem with an arbitrary continuous underlying source $X,$ constrained to having finite differential entropy, with observation noise variance $\sigma_{Z_m}^2=\sigma_Z^2,$ for $m=1, 2, \ldots, M,$ and subject to mean-squared error distortion, the CEO sum-rate distortion function is lower bounded by
\begin{align}
\lefteqn{R_X^{CEO}(D) \geq R_{X, lower}^{CEO}(D)} \nonumber \\
 & = \frac{1}{2} \log^+ \frac{\entp{X}}{D}  + \frac{M}{2} \log^+ \frac{M \entp{X}}{M\entp{\YM}-\frac{\entp{X}}{D}\sigma_{Z}^2 } \label{eq:ceo_rd_lowerbd_sqer}
\end{align}
for $D > \entp{X}\sigma_{Z}^2/(M\entp{\YM}).$
Moreover, in this case, the CEO sum-rate distortion function is upper bounded by
\begin{align}
\lefteqn{R_{X}^{CEO}(D)  \leq R_{X,upper}^{CEO}(D)} \nonumber \\
 & = \frac{1}{2}\log^+\frac{\sigma_X^2}{D} + \frac{M}{2}\log^+ \frac{M\sigma_X^2}{M\sigma_{\YM}^2 - \frac{\sigma_X^2}{D}{\sigma_{{Z}}^2}} ,
\label{eq:ceo_rd_upperbd_sqer}
\end{align}
for $D > \sigma_X^2\sigma_Z^2/(M\sigma_{\YM}^2),$
where $Y(M)$ is defined in Equation~\eqref{eq-def-YM}.
\end{cor}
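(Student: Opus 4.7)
The corollary combines a lower bound \eqref{eq:ceo_rd_lowerbd_sqer} and an upper bound \eqref{eq:ceo_rd_upperbd_sqer}, and I plan to treat them separately. The lower bound follows by specializing Theorem~\ref{thm-AWGN-CEO-generallowerbound} to mean-squared error via the Shannon lower bound~\eqref{eq:direct_lower_bd}, while the upper bound is the symmetric specialization of the Oohama-type achievability already encoded in Corollary~\ref{Cor-CEO-aux-MMSE}.

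For the lower bound, I would introduce
\[
g(r)\;:=\;r+\frac{M}{2}\log\frac{M\entp{X}}{M\entp{\YM}-\sigma_Z^2 e^{2r}}
\]
on the admissible range $\sigma_Z^2 e^{2r}<M\entp{\YM}$. A one-line differentiation gives $g'(r)=1+\tfrac{M\sigma_Z^2 e^{2r}}{M\entp{\YM}-\sigma_Z^2 e^{2r}}>0$, so $g$ is strictly increasing on its domain. The Shannon lower bound \eqref{eq:direct_lower_bd} delivers $R_X(D)\ge\underline{R}$ with $\underline{R}:=\tfrac{1}{2}\log^+(\entp{X}/D)$, and the hypothesis $D>\entp{X}\sigma_Z^2/(M\entp{\YM})$ places $\underline{R}$ inside the domain of $g$. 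Provided $R_X(D)$ also lies in this domain, Theorem~\ref{thm-AWGN-CEO-generallowerbound} combined with the monotonicity of $g$ gives $R_X^{CEO}(D)\ge g(R_X(D))\ge g(\underline{R})$, and substituting $e^{2\underline{R}}=\entp{X}/D$ (valid when $D<\entp{X}$) produces the stated form; the $\log^+$ on each summand absorbs the edge regime $D\ge\entp{X}$, in which the entropy power inequality forces the correction log to be non-positive.

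For the upper bound, I would start from the achievability statement in Corollary~\ref{Cor-CEO-aux-MMSE}, impose $\sigma_{Z_m}^2=\sigma_Z^2$ for all $m$, and invoke symmetry to set $r_i=r$ for all $i$. In this symmetric case $\sA=\sigma_Z^2$ and $\sigma_{\YA}^2=\sigma_X^2+\sigma_Z^2/|\mathcal{A}|$, so each subset constraint collapses to an expression depending only on $|\mathcal{A}|$ and $r$. Picking $r$ so that the full-set constraint is the binding one and simplifying with $M\sigma_{\YM}^2=M\sigma_X^2+\sigma_Z^2$ reduces the sum-rate bound to the claimed expression \eqref{eq:ceo_rd_upperbd_sqer}. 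The fact that the final form involves $\sigma_X^2$ rather than $\entp{X}$ is intrinsic to the scheme: its distortion is a linear-MMSE quantity sensitive only to second-order statistics, and the Berger--Tung rate expressions arising in the analysis are controlled via the Gaussian maximum-entropy bound, so both rate and distortion depend on $X$ only through its variance.

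The main obstacle is the edge case in the lower-bound argument where $R_X(D)$ itself is so large that it exits the domain of $g$; this can occur for a non-Gaussian source whose Shannon lower bound is loose at small $D$. In that regime Theorem~\ref{thm-AWGN-CEO-generallowerbound} is vacuous, and one has to fall back on the trivial bound $R_X^{CEO}(D)\ge R_X(D)$ and verify that it still dominates $g(\underline{R})$. A secondary technicality, on the upper-bound side, is rigorously justifying that the symmetric choice $r_i=r$ is sum-rate optimal inside the Oohama region; an exchange-symmetry argument together with concavity of the binding constraints in $\{r_i\}$ should deliver this, but it deserves a careful write-up.
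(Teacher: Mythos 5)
Your plan is sound, and for the lower bound it coincides with the paper's proof: the paper obtains \eqref{eq:ceo_rd_lowerbd_sqer} exactly by inserting the Shannon lower bound \eqref{eq:direct_lower_bd} into Theorem~\ref{thm-AWGN-CEO-generallowerbound}, and your explicit check that $g(r)=r+\frac{M}{2}\log\frac{M\entp{X}}{M\entp{\YM}-\sigma_Z^2e^{2r}}$ is increasing is precisely the (unstated) monotonicity that licenses replacing $R_X(D)$ by $\frac{1}{2}\log^+(\entp{X}/D)$ in both of its occurrences. One correction on the edge case you flag: if $\sigma_Z^2e^{2R_X(D)}\ge M\entp{\YM}$, your proposed fallback $R_X^{CEO}(D)\ge R_X(D)\ge g(\underline{R})$ does not work, because $g(\underline{R})$ blows up as $\underline{R}$ approaches the boundary of the domain while $R_X(D)$ stays finite. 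The clean resolution is that this regime is vacuous: any code achieving distortion $D$ satisfies $R_X(D)\le\frac{1}{n}I(X^n;\hat{X}^n)\le I(X;\YM)=\frac{1}{2}\log\frac{M\entp{\YM}}{\sigma_Z^2}$ by data processing through the sufficient statistic $\YM$, so if $R_X(D)$ exits the domain of $g$ then ${\cal R}_{CEO}(D)$ is empty and the lower bound holds trivially.

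For the upper bound, your route differs from the paper's in presentation though not in substance. The paper does not pass through the achievability part of Corollary~\ref{Cor-CEO-aux-MMSE}; it runs a self-contained Berger--Tung computation with symmetric Gaussian auxiliaries $U_i=Y_i+V_i$, bounds the sum rate by $I(\mathbf{Y};\mathbf{U})=h(\mathbf{U})-\frac{M}{2}\log(2\pi e)\sigma_V^2$ via the Gaussian maximum-entropy bound on $h(\mathbf{U})$ for its covariance matrix, and tunes $\sigma_V^2$ so that the linear-MMSE distortion equals $D$. This sidesteps entirely the issue you flag of verifying the intermediate-subset constraints for the symmetric choice $r_i=r$: bounding the single mutual information $I(\mathbf{Y};\mathbf{U})$ requires no claim about which constraints of the Oohama region are binding, nor about sum-rate optimality of the symmetric point. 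Your specialization of the region would also get there (the symmetric point does lie on the dominant face when the noise variances are equal), but it inherits the burden of those checks and of the region statement itself, which the paper only proves by a sketch deferring to Oohama; the direct computation is the shorter, self-contained path. Your explanation of why the scheme sees only $\sigma_X^2$ and never $\entp{X}$ --- linear-MMSE distortion plus a maximum-entropy rate bound --- is exactly right and matches the paper's calculation.
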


The proof of this corollary is given in Appendices~\ref{proof:thm-AWGN-CEO-lower} and~\ref{proof:thm-AWGN-CEO-upper}.

\begin{remark}
We point out that $\sigma_{\YM}^2 = \sigma_X^2 + \sigma_Z^2/M,$ but we prefer to leave it in the shape given in the above corollary in order to emphasize the duality of the upper and the lower bound.
\end{remark}

To illustrate the power of the presented bounds in a formal way, we will restrict attention to the class of source distributions $f_X(x)$ for which $\kappa_X < \infty,$ where
\begin{align}
 \kappa_X &= \left.\frac{d}{ds}\left(\entp{X+\sqrt{s}G}\right)\right|_{s=0}, \label{Eq-def-kappaX}
\end{align} 
where $G$ is a zero-mean unit-variance Gaussian random variable, independent of $X.$
Note that in the special case where $X$ itself is Gaussian, we have $\kappa_X =1.$
For starters, let us suppose that the distortion $D$ is a constant, independent of $M.$
In this case, it can be verified that both Equations~\eqref{eq:ceo_rd_lowerbd_sqer} and~\eqref{eq:ceo_rd_upperbd_sqer} tend to constants as $M$ becomes large,
and we have (see Appendix~\ref{App-Eq-gap-upperlower})
\begin{align}
\lefteqn{\lim_{M\rightarrow \infty} \left( R_{X,upper}^{CEO}(D) - R_{X, lower}^{CEO}(D) \right) }\nonumber \\
& \le \frac{1}{2}\log\frac{\sigma_X^2}{\entp{X}} +  \frac{(\kappa_X\sigma_X^2-\entp{X})\sigma_Z^2}{2\sigma_X^2\entp{X}}. \label{Eq-limit-upperlower-constantD}
\end{align}
Note that the right-hand side can also be expressed as $D(f_X\|g_X) + \frac{\sigma_Z^2}{2\sigma_X^2} ( \kappa_X \exp(2D(f_X\|g_X))-1),$ where $g_X$ is a Gaussian probability density function with the same mean and variance as $f_X,$ and $D(\cdot \| \cdot)$ denotes the Kullback-Leibler divergence.
This illustrates how the gap between the upper and the lower bound narrows as $f_X$ gets closer to a Gaussian distribution.
Arguably a more interesting regime in the CEO problem is when the distortion $D$ {\it decreases} as a function of $M:$ the more observations we have, the lower a distortion we should ask for. A natural scaling is to require the distortion to decay inversely proportional to $M.$
Specifically, let us consider a distortion $D = d/M,$ where $d>\sigma_Z^2$ is a constant independent of $M.$
Then, it is immediately clear that both the upper and the lower bound in Corollary~\ref{thm-AWGN-CEO} increase {\it linearly}  with $M.$
But how does their gap behave with $M$?
This is a slightly more subtle question.
We can show that for all $M>d/\sigma_X^2,$ the difference between Equations~\eqref{eq:ceo_rd_lowerbd_sqer} and~\eqref{eq:ceo_rd_upperbd_sqer} is upper bounded by (see Appendix~\ref{App-Eq-gap-upperlower})
\begin{align}
\lefteqn{R_{X,upper}^{CEO}(d/M) - R_{X, lower}^{CEO}(d/M)} \nonumber \\
& \le \frac{1}{2}\log\frac{\sigma_X^2}{\entp{X}} +  \frac{(\kappa_X\sigma_X^2-\entp{X})\sigma_Z^2}{2\sigma_X^2\entp{X}(1-\frac{\sigma_Z^2}{d})}, \label{Eq-gap-upperlower}
\end{align}
which does not depend on $M.$ 
Hence, when interpreted as a function of the number of agents $M,$ the bounds of Corollary~\ref{thm-AWGN-CEO} capture the behavior rather tightly.

\subsection{Rate Loss for the quadratic AWGN CEO problem}\label{Sec-implications-sub-pricedistributed}

In this section, we restrict attention to the case of the quadratic AWGN CEO problem.
The rate loss is the difference in coding rate needed in the distributed coding scenario of Figure~\ref{fig-CEO} and the coding rate that would be required if the encoders could fully cooperate.
If the encoders fully cooperate, the resulting problem is precisely a remote rate-distortion problem as defined in Section~\ref{sec-def-remote}, where the source is observed in zero-mean Gaussian noise of variance $\sigma_Z^2/M.$ This follows directly from the observation that $\YM$ as defined in Equation~\eqref{eq-def-YM} is a sufficient statistic for the underlying source $X,$ given all the noisy observations. As before, we denote the remote rate-distortion function by $R_{X}^{R}(D),$ and hence, the rate loss is the difference $R_{X}^{CEO}(D)-R_{X}^{R}(D).$
It is known that the rate loss is {\it maximal}  when the underlying source $X$ is Gaussian~\cite[Proposition~4.3]{EswaranG:09}.
For example, in the case where the distortion $D$ is required to decrease inversely proportional to $M,$
the rate loss increases {\it linearly} as a function of the number of agents $M,$ and is thus very substantial. If $X$ is not Gaussian, may we end up with a much more benign rate loss?
Restricting again to sources of non-zero entropy power and for which $\kappa_X < \infty$ (see the definition given in Equation~\eqref{Eq-def-kappaX}), we can show that 
the answer to this question is {\it no.} This follows directly from the bounds established in this paper. Specifically, we have the following statement:

\begin{thm}\label{lemma-priceofdistributed}
For the $M$-agent AWGN CEO problem with an arbitrary continuous underlying source $X,$ constrained to having finite differential entropy and $\kappa_X < \infty,$ with observation noise variance $\sigma_{Z_m}^2=\sigma_Z^2,$ for $m=1, 2, \ldots, M,$ and subject to mean-squared error distortion,
letting the distortion $D_\alpha$ be parameterized as
\begin{align}
D_\alpha &= \alpha \frac{\sigma_X^2\sigma_Z^2}{M\sigma_X^2+\sigma_Z^2},
\end{align}
where $\alpha$ satisfies
\begin{align}
 1 < \alpha & \le \min \left\{M \frac{\sigma_X^2}{\sigma_Z^2}+1, M \frac{\entp{X} \entp{\YM} }{\sigma_Z^2 \sigma_X^2} \right\} \label{Eq-rateloss-condalpha}
\end{align}
and where $\YM$ is defined in Equation~\eqref{eq-def-YM},
the rate loss of distributed coding versus centralized coding is at least
\begin{align}
\lefteqn{R_{X}^{CEO}(D_\alpha)-R_{X}^{R}(D_\alpha)} \nonumber \\
& \ge  \frac{M}{2} \log^+  \frac{ \alpha \gamma_X }{ ( \alpha \gamma_X -1)    \left(1+ \frac{\kappa_X \sigma_Z^2}{M \entp{X}} \right) }   - \frac{1}{2} \log  \frac{\gamma_X^2\alpha}{\alpha-1},
\end{align}
where $\gamma_X = \sigma_X^2/\entp{X},$ where we note that $\gamma_X \ge 1.$
\end{thm}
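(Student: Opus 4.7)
The plan is to combine the CEO sum-rate lower bound of Corollary~\ref{thm-AWGN-CEO} with the remote-rate upper bound~\eqref{eq-EPI-explicit-upper}, both evaluated at $D = D_\alpha$. Since $\YM$ is a sufficient statistic for $X$ given $(Y_1,\ldots,Y_M)$, the fully cooperative problem reduces to a single-agent remote source coding problem with Gaussian observation noise of variance $\sigma_Z^2/M$; so~\eqref{eq-EPI-explicit-upper} applies with $\sigma_Z^2$ replaced by $\sigma_Z^2/M$ and $\sigma_Y^2$ replaced by $\sigma_{\YM}^2=\sigma_X^2+\sigma_Z^2/M$. The parameterization $D_\alpha=\alpha\sigma_X^2\sigma_Z^2/(M\sigma_X^2+\sigma_Z^2)$ is exactly $\alpha$ times the centralized linear-MMSE distortion, and a direct substitution causes the two $\log^+$ pieces of the remote upper bound to telescope (the factor $M\sigma_X^2+\sigma_Z^2$ cancels between them), leaving
\begin{align*}
R_X^R(D_\alpha) \;\le\; \tfrac{1}{2}\log\tfrac{M\sigma_X^2}{\sigma_Z^2(\alpha-1)}.
\end{align*}

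For the CEO side, I substitute $D_\alpha$ into~\eqref{eq:ceo_rd_lowerbd_sqer}. The only remaining unknown in the resulting expression is $\entp{\YM}$, and this is where the hypothesis $\kappa_X<\infty$ enters. Costa's entropy power inequality states that $s\mapsto\entp{X+\sqrt{s}\,G}$ is concave on $[0,\infty)$, and since $\kappa_X$ is by definition~\eqref{Eq-def-kappaX} the slope of this map at $s=0$, the tangent bound gives
\begin{align*}
\entp{\YM} \;=\; \entp{X+\sqrt{\sigma_Z^2/M}\,G} \;\le\; \entp{X}\bigl(1+\tfrac{\kappa_X\sigma_Z^2}{M\entp{X}}\bigr).
\end{align*}
Substituting this into the denominator of the $\tfrac{M}{2}$-log term of~\eqref{eq:ceo_rd_lowerbd_sqer} (which only weakens the CEO lower bound), using $\entp{X}=\sigma_X^2/\gamma_X$, and factoring $(\alpha\gamma_X-1)/(\alpha\gamma_X)$ out of the denominator produces exactly the factor $\alpha\gamma_X/((\alpha\gamma_X-1)(1+\kappa_X\sigma_Z^2/(M\entp{X})))$ of the theorem. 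Then subtracting the remote upper bound from the CEO lower bound, the first-order piece $\tfrac{1}{2}\log(\entp{X}/D_\alpha)$ combines with $-\tfrac{1}{2}\log(M\sigma_X^2/(\sigma_Z^2(\alpha-1)))$ and reduces to $-\tfrac{1}{2}\log(\gamma_X^2\alpha/(\alpha-1))$. The two upper constraints on $\alpha$ in~\eqref{Eq-rateloss-condalpha} are used here to justify dropping the $\log^+$'s: $\alpha\le M\sigma_X^2/\sigma_Z^2+1$ is equivalent to $\sigma_X^2/D_\alpha\ge 1$, and $\alpha\le M\entp{X}\entp{\YM}/(\sigma_Z^2\sigma_X^2)$ is equivalent to $D_\alpha\ge \entp{X}\sigma_Z^2/(M\entp{\YM})$, the regime of validity of Corollary~\ref{thm-AWGN-CEO}.

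The main obstacle is the final algebraic bookkeeping: the Costa bound applied directly produces a $\tfrac{M}{2}$-log denominator of the form $(\alpha-1)/\alpha + (\sigma_Z^2/(M\entp{X}))(\kappa_X - 1/(\alpha\gamma_X))$, while the theorem's statement requires the cleanly factored form $((\alpha\gamma_X-1)/(\alpha\gamma_X))(1+\kappa_X\sigma_Z^2/(M\entp{X}))$. Reconciling the two amounts to shifting a residual between the $\tfrac{M}{2}$-log factor and the $\tfrac{1}{2}\log$ constant correction; the fact that $\gamma_X\ge 1$ (Gaussian maximizes differential entropy for fixed variance) keeps this residual of the right sign so that the transfer is lossless from the lower-bound perspective. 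The conceptual ingredients---CEO lower bound with entropy powers, remote upper bound via the sufficient statistic $\YM$, and Costa's tangent bound on $\entp{\YM}$ through $\kappa_X$---are all the content; everything else is arithmetic.
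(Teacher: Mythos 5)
Your architecture matches the paper's --- CEO lower bound from Corollary~\ref{thm-AWGN-CEO}, remote upper bound~\eqref{eq-EPI-explicit-upper} via the sufficient statistic $\YM$ (your telescoped form $\frac{1}{2}\log\frac{M\sigma_X^2}{\sigma_Z^2(\alpha-1)}$ is exactly the paper's), and Costa's tangent bound $\entp{\YM}\le\entp{X}+\kappa_X\sigma_Z^2/M$ --- but the step you flag as ``the main obstacle'' is a genuine gap, and your proposed resolution fails. Substituting $D_\alpha$ directly into~\eqref{eq:ceo_rd_lowerbd_sqer} and then applying Costa gives a $\frac{M}{2}\log$-term whose normalized denominator is $S=\frac{\alpha-1}{\alpha}+\frac{\sigma_Z^2}{M\entp{X}}\bigl(\kappa_X-\frac{1}{\alpha\gamma_X}\bigr)$, whereas the theorem requires $T=\frac{\alpha\gamma_X-1}{\alpha\gamma_X}\bigl(1+\frac{\kappa_X\sigma_Z^2}{M\entp{X}}\bigr)$. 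One computes $S-T=\frac{1}{\alpha\gamma_X}\bigl[(1-\gamma_X)+\frac{\sigma_Z^2(\kappa_X-1)}{M\entp{X}}\bigr]$, so the sign is \emph{not} controlled by $\gamma_X\ge 1$ alone: since $\kappa_X\ge 1$ always (by the entropy power inequality applied to $X+\sqrt{s}G$), the second term pushes the wrong way, and whenever $\frac{\sigma_Z^2(\kappa_X-1)}{M\entp{X}}>\gamma_X-1$ you get $S>T$, so your $\frac{M}{2}\log(1/S)$ is \emph{strictly smaller} than the theorem's $\frac{M}{2}\log(1/T)$. The deficit $\frac{M}{2}\log(S/T)$ approaches $\frac{M}{2}\log\frac{\alpha\gamma_X}{\alpha\gamma_X-1}$ as $\kappa_X$ grows, while the only surplus available from your first-order term is the bounded quantity $\frac{1}{2}\log(\sigma_{\YM}^2/\entp{X})$; the transfer is therefore not lossless and the stated inequality does not follow. (There exist densities with $\gamma_X$ arbitrarily close to $1$ but $\kappa_X$ arbitrarily large, e.g.\ a Gaussian with a small high-frequency perturbation, so this regime is non-vacuous.)

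The missing idea is to exploit monotonicity of the rate-distortion function \emph{before} applying Costa: setting $\tilde D_\beta = \beta\,\entp{X}\sigma_Z^2/(M\entp{\YM})$ with $\beta=\alpha\gamma_X$, one has $\tilde D_\beta\ge D_\alpha$ (because $\entp{\YM}\le\sigma_{\YM}^2$), hence $R_X^{CEO}(D_\alpha)\ge R_X^{CEO}(\tilde D_\beta)$, and at $\tilde D_\beta$ the denominator of the second logarithm in~\eqref{eq:ceo_rd_lowerbd_sqer} factors exactly as $M\entp{\YM}(1-1/\beta)$. Costa's bound then enters as a single multiplicative factor $\bigl(1+\frac{\kappa_X\sigma_Z^2}{M\entp{X}}\bigr)$, yielding the theorem's cleanly factored form with no residual to shift; the first logarithm is handled with $\entp{\YM}\ge\entp{X}$, and the constraint $\alpha\le M\entp{X}\entp{\YM}/(\sigma_Z^2\sigma_X^2)$ (equivalently $\beta\le M\entp{\YM}/\sigma_Z^2$) is what lets you drop that $\log^+$.
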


The proof is given in Appendix~\ref{app-proof-Sec-implications}. Note that the rate loss has to be non-negative, hence our formula can be slightly improved by only keeping the positive part. We prefer not to clutter our notation with this since it becomes immaterial as soon as $M$ gets large.

While the bound of Theorem~\ref{lemma-priceofdistributed} is valid for all choices of the parameters, it is arguably most interesting when interpreted as a function of the number of agents $M.$ When $\alpha$ is a constant independent of $M$ and thus, the distortion decreases inversely proportional to $M,$ it is immediately clear that the rate loss increases linearly with $M.$

\section{Joint Source-Channel Coding}\label{Sec-implications-sub-digital}

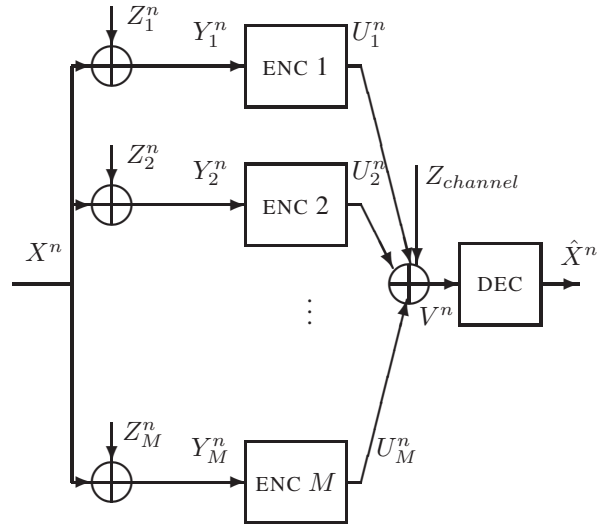
\begin{figure}
  \begin{center}
  \setlength{\unitlength}{1.5pt}
  \begin{picture}(140,130)(130,0)
\thicklines
    \put (130, 60) {\line (1, 0) {15} }
    \put (133, 63) {\makebox (10,10) {$X^n$}}

    \put (145, 10) {\line (0, 1) {105} }

    \put (145, 115) {\vector (1, 0) {5} }
    \put (150, 115) {\line (1, 0) {20} }
    \put (155, 115) {\circle {10} }
    \put (155, 110) {\line (0, 1) {10} }
    \put (155, 130) {\vector (0, -1) {10} }
    \put (158, 122) {\makebox (10,10) {$Z_1^n$}}

    \put (145, 80) {\vector (1, 0) {5} }
    \put (150, 80) {\line (1, 0) {20} }
    \put (155, 80) {\circle {10} }
    \put (155, 75) {\line (0, 1) {10} }
    \put (155, 95) {\vector (0, -1) {10} }
    \put (158, 87) {\makebox (10,10) {$Z_2^n$}}

    \put (145, 10) {\vector (1, 0) {5} }
    \put (150, 10) {\line (1, 0) {20} }
    \put (155, 10) {\circle {10} }
    \put (155, 5) {\line (0, 1) {10} }
    \put (155, 25) {\vector (0, -1) {10} }
    \put (158, 17) {\makebox (10,10) {$Z_M^n$}}

    \put (170, 115) {\vector (1, 0) {19} }
    \put (175, 118) {\makebox (10,10) {$Y_1^n$}}
    \put (189, 105) {\framebox (25,20) {\sc enc 1} }
    \put (214, 115) {\line (1, 0) {4} }
    \put (218, 115) {\vector (1, -4) {12.5} }
    \put (215,118) {\makebox (10,10) {$U_1^n$}}

    \put (170, 80) {\vector (1, 0) {19} }
    \put (175, 83) {\makebox (10,10) {$Y_2^n$}}
    \put (189, 70) {\framebox (25,20) {\sc enc 2} }
    \put (214, 80) {\line (1, 0) {4} }
    \put (218, 80) {\vector (1, -2) {8} }
    \put (215, 83) {\makebox (10,10) {$U_2^n$}}

    \put (200, 50) {\makebox (10,10) {$\vdots$}}

    \put (170, 10) {\vector (1, 0) {19} }
    \put (175, 13) {\makebox (10,10) {$Y_M^n$}}
     \put (189, 0) {\framebox (25,20) {\sc enc $M$} }
    \put (214, 10) {\line (1, 0) {4} }
    \put (218, 10) {\vector (1, 4) {11.5} }
    \put (222,13) {\makebox (10,10) {$U_M^n$}}

    \put (232, 90) {\vector (0, -1) {25} }
    \put (241,82) {\makebox (10,10) {$Z_{channel}$}}

    \put (230, 60) {\circle {10} }
    \put (225, 60) {\line (1, 0) {10} }
    \put (230, 55) {\line (0, 1) {10} }

    \put (235, 60) {\vector (1, 0) {8} }
    \put (232, 47) {\makebox (10,10) {$V^n$}}

    \put (243, 50) {\framebox (20,20) {\sc dec} }
    \put (263, 60) {\vector (1, 0) {10} }
    \put (268, 63) {\makebox (10,10) {${\hat X}^n$}}
  \end{picture}
  \end{center}
\caption{A network joint source--channel coding problem inspired by the CEO problem. $X$ is an arbitrary source with variance (power) $\sigma_X^2$ (not necessarily Gaussian) and entropy power $\entp{X}.$ Each encoder can produce a codeword $U_i^n$ of average power no more than $P,$ which is then transmitted over a standard symmetric additive white Gaussian noise multiple-access channel.}
\label{fig-JSCC}
\end{figure}

One important application of the new bound presented here is to network joint source-channel coding.

\subsection{Problem Statement}

The ``sensor'' network considered in this section is illustrated in Figure~\ref{fig-JSCC}.
The underlying source $X$ and the source observation process are exactly as in the AWGN CEO problem defined above,
and we will only consider the simple symmetric case where all observation noise variances are equal, that is, $\sigma_{Z_m}^2=\sigma_Z^2,$ for $m=1, 2, \ldots, M.$
Additionally, in the present section, we restrict attention to those source distributions $f_X(x)$ for which $\kappa_X<\infty,$
where $\kappa_X$ is as defined in Equation~\eqref{Eq-def-kappaX}.

With reference to Figure~\ref{fig-JSCC}, encoder $m$ can apply an {\em arbitrary} sequence of real-valued coding functions $f_{m, i}(\cdot),$
for $i=1, 2, \ldots, n,$ to the observation sequence
such as to generate a sequence of channel inputs,
\begin{align}
U_m[i] &= f_{m, i}(Y_m[1], Y_m[2], \ldots, Y_m[n]).
\end{align}
The only constraint is that the functions $f_{m, i}(\cdot)$ be chosen to ensure that
\begin{eqnarray}
   \frac{1}{n} \sum_{i=1}^n {\mathbb E}\left[ ( U_m[i] )^2 \right] & \le & P,  \label{Eq-PowerCon}
\end{eqnarray}
for $m=1, 2, \ldots, M.$
For $i=1, 2, \ldots, n,$
the channel outputs are given by
\begin{eqnarray}
  V[i] & = & Z_{channel}[i] + \sum_{m=1}^M U_m[i],
\end{eqnarray}
where
$\{ Z_{channel}[i] \}_{i=1}^n$ is an i.i.d. sequence of Gaussian random variables
of mean zero and variance $\sigma_{channel}^2.$
Upon observing the channel output sequence $\{ V[i] \}_{i=1}^n,$
the decoder (or fusion center) must produce a sequence
$\hat{X}[i] = g_i (V[1], V[2], \ldots, V[n]).$
A power-distortion pair $(P, D)$ is said to be {\it achievable}  if there exists a sequence of sets of mappings
$\{ f_{m, i}(\cdot) \}_{i=1}^n,$ for $m=1, 2, \ldots, M,$ and $\{g_i\}_{i=1}^n$ (a sequence as a function of $n$) with
\begin{align}
   \limsup_{n\rightarrow\infty} \frac{1}{n} \sum_{i=1}^n {\mathbb E}\left[ ( X[i] - \hat{X}[i] )^2 \right] & \le D .\label{Eq-DistDef}
\end{align}
The power-distortion region for this network joint source-channel coding problem is the closure of the set of all achievable power-distortion pairs.

\subsection{Main Result}

The main result of this section is an assessment of the performance of {\it digital}  communication strategies for the communication problem illustrated in Figure~\ref{fig-JSCC}.
To put this in context, it is important to recall the so-called {\it source-channel separation theorem}  due to Shannon, see e.g.~\cite[Sec.~7.13]{CoverThomas06}.
For stationary ergodic point-to-point communication, this theorem establishes that it is without fundamental loss of optimality to compress the source to an index (that is, a bit stream) and then to communicate this index in a reliable fashion across the channel using capacity-approaching codes.
Such strategies are commonly known as {\it digital}  communication and are the underpinnings of most of the existing communication systems.

It is well-known that source-channel separation is suboptimal in {\it network}  communication settings, see e.g.~\cite[p.~592]{CoverThomas06}.
This suboptimality can be very substantial.
Specifically, for the example scenario as in Figure~\ref{fig-JSCC}, but where the underlying source $X$ is Gaussian, it was shown in~\cite[Sec.~5.4.6]{gastpar:02thesis}
that the suboptimality manifests itself as an {\it exponential}  gap in scaling behavior when viewed as a function of the number of nodes in the network.\footnote{In fact, for this special case, the optimal performance was characterized precisely in~\cite{Gastpar:08}.}
Could this gap be less dramatic for sources $X$ that are not Gaussian?
The new bounds established in the present paper allow to answer this question in the {\it negative.}
Specifically, we have the following result:

\begin{thm}\label{thm-jscc}
For the joint source-channel network considered in this section, if each encoder first compresses its noisy source observations into an index using the optimal CEO source code,
and this index is then communicated reliably over the multiple-access channel, the resulting power-distortion region must satisfy
\begin{align}
D_{\mathrm{d}} & \ge  \frac{\entp{X}\sigma_Z^2}{ \entp{X}\log \left( 1 + M^2\frac{ P}{\sigma^2_{channel}} \right) + \kappa_X \sigma_Z^2}.\label{Eq-scalinglaw-digital}
\end{align}
By contrast, there exists an (analog) communication strategy that incurs a distortion of 
\begin{align}
     D_{\mathrm{a}} & =   \frac{ \sigma_X^2 \sigma_Z^2}{M \sigma_X^2 + \sigma_Z^2 }
\left( 1 +  \frac{ M ( \sigma_X^2\sigma_{channel}^2/ \sigma_Z^2 )  }{\frac{M\sigma_X^2+\sigma_Z^2}{\sigma_X^2+\sigma_Z^2} MP +\sigma_{channel}^2} \right). \label{Eq-scalinglaw-analog}
\end{align}
\end{thm}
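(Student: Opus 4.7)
The proof has two independent parts. For the digital lower bound~(\ref{Eq-scalinglaw-digital}), the plan is to combine the CEO sum-rate lower bound~(\ref{eq:ceo_rd_lowerbd_sqer}) with a Gaussian MAC sum-capacity upper bound. Since the digital encoders transmit reliably over the multiple-access channel, $\sum_m R_m$ cannot exceed the cooperative MAC sum-capacity $\frac{1}{2}\log(1+M^2P/\sigma_{channel}^2)$, a valid (loose) upper bound obtained by allowing the $M$ encoders to superimpose coherently so that the received signal power is at most $M^2 P$. The key step in using~(\ref{eq:ceo_rd_lowerbd_sqer}) is to control the denominator $M\entp{\YM}-\entp{X}\sigma_Z^2/D_d$. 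For this I would invoke Costa's entropy power inequality: the map $s\mapsto\entp{X+\sqrt{s}G}$ is concave on $[0,\infty)$, and since its slope at $s=0$ is $\kappa_X$ by~(\ref{Eq-def-kappaX}), concavity yields the global tangent bound $\entp{\YM}\le \entp{X}+\kappa_X\sigma_Z^2/M$. Substituting this upper bound into the denominator, dropping the non-negative $\frac{1}{2}\log^+(\entp{X}/D_d)$ term from~(\ref{eq:ceo_rd_lowerbd_sqer}), and applying the elementary inequality $-\log(1-y)\ge y$ collapses the second term to the linear expression $\frac{\sigma_Z^2}{2D_d}-\frac{\kappa_X\sigma_Z^2}{2\entp{X}}$. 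Setting this at most $\frac{1}{2}\log(1+M^2P/\sigma_{channel}^2)$ and solving for $D_d$ yields~(\ref{Eq-scalinglaw-digital}) directly.

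For the analog upper bound~(\ref{Eq-scalinglaw-analog}), the plan is to exhibit an explicit uncoded amplify-and-forward scheme and compute its achieved distortion. Assume without loss of generality that $E[X]=0$. Each encoder transmits $U_m[i]=\alpha Y_m[i]$ with $\alpha=\sqrt{P/(\sigma_X^2+\sigma_Z^2)}$, which meets the power constraint~(\ref{Eq-PowerCon}) with equality. The per-symbol channel output is then $V=\alpha M X+\alpha\sum_m Z_m+Z_{channel}$, a linear combination of independent zero-mean random variables. The decoder applies the linear MMSE estimator $\hat X = [\mathrm{Cov}(X,V)/\mathrm{Var}(V)]\,V$, whose mean-squared error is $\sigma_X^2-[\mathrm{Cov}(X,V)]^2/\mathrm{Var}(V)$. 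A direct computation with $\mathrm{Cov}(X,V)=\alpha M\sigma_X^2$ and $\mathrm{Var}(V)=\alpha^2 M^2\sigma_X^2+\alpha^2 M\sigma_Z^2+\sigma_{channel}^2$, followed by substituting $\alpha^2=P/(\sigma_X^2+\sigma_Z^2)$ and rearranging, simplifies to the right-hand side of~(\ref{Eq-scalinglaw-analog}). Gaussianity of $X$ is not needed, because the LMMSE estimator is always achievable and its achieved distortion depends only on second-order statistics.

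The main technical hurdle lies in the converse: justifying the global tangent bound from Costa's EPI (which requires the full concavity statement, not merely the infinitesimal content of~(\ref{Eq-def-kappaX})), and verifying that the arguments of $\log^+$ in~(\ref{eq:ceo_rd_lowerbd_sqer}) are at least $1$ in the regime of interest so that the linearization $-\log(1-y)\ge y$ is valid. Once these are in place, the remaining work is purely algebraic. The analog half is essentially a second-order calculation and is routine once the amplify-and-forward scheme has been specified.
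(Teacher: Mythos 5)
Your proposal is correct and follows essentially the same route as the paper's proof: the converse combines the loosened CEO sum-rate lower bound with the cooperative MAC capacity $\tfrac{1}{2}\log(1+M^2P/\sigma_{channel}^2)$ and the Costa-concavity tangent bound $\entp{\YM}\le\entp{X}+\kappa_X\sigma_Z^2/M$, with your linearization $-\log(1-y)\ge y$ being the same step the paper phrases as $M(1-e^{-2R/M})\le 2R$; the analog part is the identical amplify-and-forward scheme with LMMSE decoding. The technical caveats you flag (global concavity from Costa's EPI and the $\log^+$ regime) are handled exactly as you anticipate and do not cause problems.
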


A proof of this theorem is given in Appendix~\ref{App-thm-jscc}.

The insight of Theorem~\ref{thm-jscc} lies in the comparison of Inequality~\eqref{Eq-scalinglaw-digital} with Equation~\eqref{Eq-scalinglaw-analog}.
Namely, the dependence of the attainable distortion on the number of agents $M:$ As one can see, for digital architectures, characterized by Inequality~\eqref{Eq-scalinglaw-digital},
the distortion decreases inversely proportional to the {\it logarithm}  of $M.$ By contrast, from Equation~\eqref{Eq-scalinglaw-analog}, there is a scheme for which the decrease is inversely proportional to $M.$ This represents an {\it exponential}  gap in the scaling-law behavior.
In other words, in order to attain a certain fixed desired distortion level $D,$ the number of agents needed in a digital architecture is {exponentially larger}  than the corresponding number for a simple analog scheme.
Hence, the bounds presented here imply that the exponential suboptimality of digital coding strategies observed in~\cite[Thm.~1 versus Thm.~2]{Gastpar:08} continues to hold for a large class of underlying sources $X$ with non-zero entropy power.

\section*{Acknowledgements}

The authors acknowledge helpful discussions with Aaron Wagner, Vinod Prabhakaran, Paolo Minero, Anand Sarwate, and Bobak Nazer, who were all part of the Wireless Foundations Center at the University of California, Berkeley, when this work was started.
They also thank Mich\`ele Wigger for comments on the manuscript.

\appendices

\section{Proofs for Section~\ref{sec-known}}\label{App-proofs-known}

\subsection{Proofs of Inequalities~\eqref{eqn-remote-SLB} and~\eqref{eqn-estimation}}

For Inequality~\eqref{eqn-remote-SLB}, we start by considering the (remote) distortion-rate function, $D_X^R(R),$ that is, the dual version of the minimization problem in Equation~\eqref{eq-def-Remote-RDF}, which can be expressed as
\begin{align}
  \lefteqn{D_X^R(R)} \nonumber \\
  & =  \min_{f(\hat{x}|y) : I(Y; \hat{X}) \le R}  {\mathbb E}[|X-\hat{X}|^2] \nonumber \\
    & =  \min_{f(\hat{x}|y) : I(Y; \hat{X}) \le R}  {\mathbb E}\left[\left|X- {\mathbb E}[X|Y] +{\mathbb E}[X|Y] -    \hat{X}\right|^2\right]  \nonumber \\
 & =   \min_{f(\hat{x}|y) : I(Y; \hat{X}) \le R}  {\mathbb E}\left[\left|X- {\mathbb E}[X|Y] \right|^2 \right]  +{\mathbb E}\left[\left|{\mathbb E}[X|Y] -    \hat{X}\right|^2\right]  \nonumber
 \end{align}
 by the properties of the conditional expectation. We thus obtain
 \begin{align}
  \lefteqn{D_X^R(R)} \nonumber \\
        & =     \underbrace{{\mathbb E}\left[\left|X- {\mathbb E}[X|Y] \right|^2 \right]}_{=D_0}  + \min_{f(\hat{x}|y) : I(Y; \hat{X}) \le R}  {\mathbb E}\left[\left|\underbrace{{\mathbb E}[X|Y]}_{=V} -    \hat{X}\right|^2\right]  \nonumber \\
     & =  D_0 + \underbrace{\min_{f(\hat{x}|v) : I(V; \hat{X}) \le R}{\mathbb E}\left[\left|V -    \hat{X}\right|^2\right]}_{D_V(R)} .
  \end{align}
   where for the last step, the data processing inequality implies that $I(Y; \hat{X}) \ge I(V; \hat{X}),$
 and hence, the second minimum cannot evaluate to something larger than the first.
Since $V$ is a deterministic function of $Y,$ we have that for the minimizing $f(\hat{x}|v)$ in the second minimum, it holds that $I(Y; \hat{X}) = I(V; \hat{X}).$
Hence, the two minima are equal.
 Conversely, we can thus write
  \begin{align}
  R_X^R(D) & =  R_V(D - D_0),
  \end{align}
where $R_V(D)$ denotes the rate-distortion function (under mean-squared error) of the source $V.$
The claimed lower and upper bounds now follow from Equations~\eqref{eq:direct_lower_bd} and~\eqref{eq:direct_upper_bd}, applied to $R_V(D)$.

For Inequality~\eqref{eqn-estimation}, the upper bound is simply the distortion incurred by the best linear estimator.
For the lower bound, observe that since by assumption, we can recover $X$ to within distortion $D_0$ from $Y,$ we must have
\begin{align}
I(X; Y) &\ge \min_{f(\hat{x}|x): {\mathbb E}[d(X, \hat{X})] \le D_0} I(X; \hat{X}) = R_X(D_0).
\end{align}
Under mean-squared error distortion, we know from Inequality~\eqref{eq:direct_lower_bd} that $R_X(D_0) \ge  \frac{1}{2} \log^+ \frac{\entp{X}}{D_0}.$
Combining this with the above, we obtain
\begin{align}
 I(X; Y) & \ge \frac{1}{2} \log^+ \frac{\entp{X}}{D_0}.
\end{align}
First, let us restrict to the case where $D_0 \le \entp{X}.$
In this case, we can further conclude that
\begin{align}
 e^{2I(X;Y)} & \ge \frac{\entp{X}}{D_0}.
\end{align}
Observing that $I(X;Y) = h(Y) - h(Z),$ we can rewrite this as
\begin{align}
 \frac{\entp{Y}}{\entp{Z}} & \ge \frac{\entp{X}}{D_0},
\end{align}
which is exactly the claimed bound.
Conversely, suppose that $D_0 > \entp{X}.$
By the entropy power inequality, we have that $\frac{\entp{Z}}{\entp{Y}} \le 1,$
meaning that the left-hand side of Inequality~\eqref{eqn-estimation} evaluates to something {\it no larger} than $\entp{X}.$
Since we assumed that  $D_0 > \entp{X},$ the claimed lower bound applies in this case, too.

\subsection{Proofs of Inequalities~\eqref{eq-EPI-general}-\eqref{eq-EPI-explicit-upper}}

\subsubsection*{Lower Bounds}
Recall that here, we are assuming that the observation noise $Z$ is Gaussian. Then, the lower bound in Inequality~\eqref{eq-EPI-general} can be established e.g. as a consequence of~\cite[Thm.1]{Courtade:it18}, as follows.
\begin{align}
  R^R_X(D) & = \min_{f(\hat{x}|y) : {\mathbb E}[d(X, \hat{X})] \le D} I(Y; \hat{X})   \\
      & \ge\min_{f(\hat{x}|y) : {\mathbb E}[d(X, \hat{X})] \le D} \frac{1}{2} \log \frac{e^{2h(X)}}{e^{2(h({Y})-I(X;\hat{X}))}-e^{2h(Z)}}.
\end{align}
where the inequality is due to~\cite[Thm.~1]{Courtade:it18} and the fact that by construction, we have that the Markov chain $X \leftrightarrow Y \leftrightarrow \hat{X}$ holds. 
Next, we observe that by definition, $R_X(D) \le I(X; \hat{X}).$ As long as $D$ is such that $e^{2 R_X(D)} \le \entp{Y}/\sigma_{Z}^2,$
the denominator stays non-negative. For such values of $D,$ we thus have
\begin{align}
  R^R_X(D)       & \ge\frac{1}{2} \log \frac{\entp{X}}{\entp{Y}e^{-2R_X(D)}-\sigma_{Z}^2}   \\
      & = R_X(D) + \frac{1}{2} \log \frac{\entp{X}}{\entp{Y} - \sigma_{Z}^2 e^{2 R_X(D)}}.
\end{align}
Finally, since for all values of $D,$ we have $R^R_X(D) \ge  R_X(D),$ we obtain
\begin{align}
  R^R_X(D)       & \ge  R_X(D) + \frac{1}{2} \log^+ \frac{\entp{X}}{\entp{Y} - \sigma_{Z}^2 e^{2 R_X(D)}}.
\end{align}
For the lower bound in Inequality~\eqref{eq-EPI-explicit-lower}, it suffices to lower bound $R_X(D)$ in Inequality~\eqref{eq-EPI-general} using Inequality~\eqref{eq:direct_lower_bd}.

\subsubsection*{Upper Bounds}
For the upper bound in Inequality~\eqref{eq:remote_rd_upperbd_thm_so}, let us consider $U = Y + V = X+Z+V,$ where $V$ is Gaussian $\mathcal{N}(0, \sigma_X^2/s-\sigma_Z^2).$
Now, let us suppose that $s$ can be chosen in such a way that
\begin{equation}
 \min_{g} {\mathbb E}[ d\left(X,g(U)\right)]  \leq D .
\end{equation}
Then, from the definition of the remote rate-distortion function (Equation~\eqref{eq-def-Remote-RDF}), we find
\begin{align}
R_X^R(D) &\leq I(Y;U) \\ &= h(U) - h(V) \\
&\leq \frac{1}{2} \log \left( 2\pi e \sigma_X^2\frac{1+s}{s}\right)-h(V) \\
&=\frac{1}{2} \log \left( \sigma_X^2\frac{1+s}{\sigma_X^2-s\sigma_Z^2 }\right)
\\
&= \frac{1}{2}\log \left(1+s\right)+ \frac{1}{2} \log \left(
\frac{\sigma_X^2}{\sigma_X^2-s\sigma_Z^2 }\right),
\end{align}
where the second inequality is a standard maximum-entropy argument.
To bring out the similarity to the corresponding lower bound, we reparameterize as $s = e^{2r}-1.$
For the upper bound in Inequality~\eqref{eq-EPI-explicit-upper}, we now observe that under mean-squared error distortion, as long as $D < \sigma_X^2,$ we may choose
\begin{align}
r & = \frac{1}{2} \log \frac{\sigma_X^2}{D},
\end{align}
or, equivalently, $\sigma_X^2/s = \frac{\sigma_X^2D}{\sigma_X^2-D}.$
To see that this is a valid choice satisfying the restriction of Equation~\eqref{eq:ineq_snr_so}, it suffices to observe that
\begin{align}
  \min_{\alpha \in {\mathbb R}} {\mathbb E}\left[ \left(X - \alpha(X+Z+V)\right)^2\right] & = D,
\end{align}
and thus we satisfy $\min_{g} {\mathbb E}[ d\left(X,g(X+Z+V)\right)]  \leq D.$
Finally, for $D \ge \sigma_X^2,$ the upper bound in Inequality~\eqref{eq-EPI-explicit-upper} evaluates to zero, which is trivially a correct bound, too.

\section{Proofs for Section~\ref{sec-main}}

\subsection{Proof of Theorem~\ref{thm-CEO-aux}}\label{App-proof-main}

The starting point for our lower bound is an outer bound introduced by Wagner and Anantharam
\cite{wagnerthesis-05,WagnerA:08}.
To state this bound, we write the vector of noisy observations as $\mathbf{Y} = (Y_1, Y_2, \ldots, Y_M)$
and we collect the elements $Y_i$ with $i$ in a subset ${\cal A}$ of the set $\{1, 2, \cdots, M\}$ into a vector
\begin{align}
\mathbf{Y}_{\cal A} &= ( Y_i )_{i\in{\cal A}},
\end{align}
and likewise, we introduce the auxiliary random vector $\mathbf{U} = (U_1, U_2, \ldots, U_M)$ and again collect the elements $U_i$ with $i$ in a subset ${\cal A}$ of the set $\{1, 2, \cdots, M\}$ into a vector
\begin{align}
\mathbf{U}_{\cal A} &= ( U_i )_{i\in{\cal A}}.
\end{align}
Then, the following statement applies.

\begin{thm}\label{thm-wagnerA}
Let $R_i$ denote the rate of the description provided by agent $i.$
There must exist a set of random variables $(X,\mathbf{Y},\mathbf{U},W,T,\hat{X}) \in \mathcal{W}_X^{CEO}(D)$
such that for all subsets ${\cal A} \subseteq \{ 1, 2, \ldots, M \},$
\begin{align} \sum_{i \in {\cal A}}R_i &\geq I(X;\mathbf{U},T) -
I(X; \mathbf{U}_{{\cal A}^c}|T) + \sum_{i \in {\cal A}} I(Y_i;U_i|X,W,T) ,
\label{eq:wagner_outer_bd} \end{align}
where
$\mathcal{W}_X^{CEO}(D)$ is the set of sets of random variables $(X,\mathbf{Y},\mathbf{U},W,T,\hat{X})$
satisfying ${\mathbb E}\left[ d(X,\hat{X}) \right]\leq D$ and
\begin{description}
    \item[$(i)$] $(W,T)$ is independent of $(X,\mathbf{Y})$,
    \item[$(ii)$] $\mathbf{U}_{\cal B} \leftrightarrow (\mathbf{Y}_{\cal B}, W, T) \leftrightarrow (X, \mathbf{Y}_{{\cal B}^c}, \mathbf{U}_{{\cal B}^c})$ for all ${\cal B} \subseteq \{1, \ldots M\}$,
    \item[$(iii)$] $(X,W,T) \leftrightarrow (\mathbf{U},T) \leftrightarrow \hat{X}$ , and
    \item[$(iv)$] the conditional distribution of $U_i$ given $W$ and $T$ is discrete for each $i$.
\end{description}
\end{thm}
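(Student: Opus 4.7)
The plan is to proceed via a standard $n$-letter converse followed by single-letterization with a carefully chosen time-sharing auxiliary. Starting from any $(2^{nR_1},\ldots,2^{nR_M},n)$ code with encoder outputs $J_i=j_i(Y_i^n)$ and decoded sequence $\hat{X}^n$, I would use the rate constraint $nR_i \geq H(J_i)$, subadditivity, and conditioning-reduces-entropy to obtain $n\sum_{i\in{\cal A}} R_i \geq H(J_{\cal A}) \geq H(J_{\cal A}\mid J_{{\cal A}^c})$. A chain-rule split gives $H(J_{\cal A}\mid J_{{\cal A}^c}) = I(X^n;J_{\cal A}\mid J_{{\cal A}^c}) + I(Y_{\cal A}^n;J_{\cal A}\mid X^n,J_{{\cal A}^c})$, and the key CEO conditional-independence property---that given $X^n$ the pairs $(Y_i^n,J_i)$ are mutually independent across $i$ (because the noises $Z_i$ are independent across encoders and $J_i$ is a function of $Y_i^n$ alone)---reduces the second term to $\sum_{i\in{\cal A}} I(Y_i^n;J_i\mid X^n)$.

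The crucial step is the choice of single-letter auxiliaries. I would introduce $T$ uniform on $\{1,\ldots,n\}$ and independent of the source, set $X=X_T$, $\mathbf{Y}=\mathbf{Y}_T$, $\hat{X}=\hat{X}_T$, and take
\begin{align*}
W = (X^{T-1},X_{T+1}^n), \qquad U_i = (J_i,X^{T-1}).
\end{align*}
The seemingly redundant inclusion of $X^{T-1}$ inside $U_i$ is the critical design choice: it makes the chain-rule expansion $I(X^n;J_{\cal A}\mid J_{{\cal A}^c}) = \sum_t I(X_t;J_{\cal A}\mid J_{{\cal A}^c},X^{t-1})$ line up exactly with $n\,I(X;\mathbf{U}_{\cal A}\mid \mathbf{U}_{{\cal A}^c},T)$, which in turn equals $I(X;\mathbf{U},T) - I(X;\mathbf{U}_{{\cal A}^c}\mid T)$ since $X_T$ is independent of $T$ by the i.i.d.\ source. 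For the per-encoder term, because $(X,W)$ together pin down $X^n$, one has $I(Y_i;U_i\mid X,W,T) = \tfrac{1}{n}\sum_t I(Y_i[t];J_i\mid X^n)$, and the chain-rule expansion $I(Y_i^n;J_i\mid X^n) = \sum_t I(Y_i[t];J_i\mid X^n,Y_i^{t-1})$, combined with the conditional independence of $Y_i[t]$ and $Y_i^{t-1}$ given $X^n$, yields $I(Y_i^n;J_i\mid X^n) \geq n\,I(Y_i;U_i\mid X,W,T)$ in the needed direction.

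It then remains to verify that the constructed tuple lies in $\mathcal{W}_X^{CEO}(D)$. Condition~(i), $(W,T)$ independent of $(X,\mathbf{Y})$, follows from $X^{T-1},X_{T+1}^n$ being independent of $X_T$ and $\mathbf{Z}_T$ being independent of everything else; condition~(iii) is immediate because $\hat{X}=\hat{X}_T$ is a deterministic function of $(\mathbf{J},T)$, which is contained in $(\mathbf{U},T)$; and the distortion constraint $\mathbb{E}[d(X,\hat{X})]\le D$ is just the time-average of the per-letter distortions. The main obstacle I anticipate is the joint verification of~(ii) and~(iv). For~(ii), the argument rests on the observation that given $(\mathbf{Y}_{\cal B},W,T)$ the indices $\{J_i\}_{i\in{\cal B}}$ depend only on the noise samples $\{Z_i[t]: i\in{\cal B},\,t\neq T\}$ together with $Y_{{\cal B},T}$---quantities that are independent of $X_T$ and of the noises indexed by ${\cal B}^c$, which in turn drive $(X,\mathbf{Y}_{{\cal B}^c},\mathbf{U}_{{\cal B}^c})$. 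Condition~(iv) is the subtler one, and is precisely what forces $X^{T-1}$ to appear in both $U_i$ and $W$: on its own $X^{T-1}$ is continuous, but once $W$ is placed in the conditioning, its $X^{T-1}$-component becomes a point mass and $U_i\mid(W,T)$ reduces to the discrete random variable $J_i$, as required.
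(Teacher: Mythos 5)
Your proposal is correct, but note that the paper does not actually prove Theorem~\ref{thm-wagnerA}: it imports it wholesale from Wagner's thesis and Wagner--Anantharam, remarking only that the Gaussian-source assumption in those references can be relaxed to finite differential entropy. What you have written is, in effect, a self-contained reconstruction of that outer bound, and it is the right one. The decomposition $n\sum_{i\in{\cal A}}R_i \ge I(X^n;J_{\cal A}\mid J_{{\cal A}^c}) + \sum_{i\in{\cal A}}I(Y_i^n;J_i\mid X^n)$ is exactly the standard starting point, and your identification of the auxiliaries --- $T$ uniform, $W=(X^{T-1},X_{T+1}^n)$, $U_i=(J_i,X^{T-1})$ --- is precisely what makes all four structural conditions hold simultaneously; in particular you correctly diagnose that condition $(iv)$ is the reason $X^{T-1}$ must sit inside both $U_i$ and $W$, and that condition $(ii)$ reduces to the independence of $\{Z_i[s]\}_{i\in{\cal B},\,s\ne T}$ from everything driving $(X,\mathbf{Y}_{{\cal B}^c},\mathbf{U}_{{\cal B}^c})$ once $(\mathbf{Y}_{\cal B},W,T)$ is fixed. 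Two small points deserve one more line each in a full write-up. First, the identity $I(X^n;J_{\cal A}\mid J_{{\cal A}^c}) = n\,I(X;\mathbf{U}_{\cal A}\mid \mathbf{U}_{{\cal A}^c},T)$ uses that $X^{t-1}$ is already present in $\mathbf{U}_{{\cal A}^c}$, which only holds when ${\cal A}^c\ne\emptyset$; for ${\cal A}=\{1,\ldots,M\}$ you instead need $I(X_t;X^{t-1})=0$ from the i.i.d.\ source, which you implicitly invoke via $I(X;T)=0$ but should state. Second, the distortion constraint as defined in the paper is a $\limsup$ over a sequence of codes, so passing to a single tuple in $\mathcal{W}_X^{CEO}(D)$ requires the usual $\epsilon$-slack and closure/continuity argument; this is routine but is part of why the theorem is stated with the region ${\cal R}_{CEO}(D)$ being a closure. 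Neither point is a gap in the idea, and your argument indeed uses Gaussianity of the $Z_i$ nowhere, which is consistent with the paper's claim that the bound survives for non-Gaussian $X$ of finite differential entropy.
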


For a proof of this theorem, see~\cite[p.~109]{wagnerthesis-05} or~\cite[Theorem~1, Appendix D, and start of the proof of Proposition 6]{WagnerA:08}.
Strictly speaking, in that proof, both the source and the observation noises are assumed to be Gaussian, but all arguments continue to hold for sources of finite differential entropy observed in Gaussian noise.

From this theorem, the following corollary will be of specific interest to our development:

\begin{cor}
There must exist a set of random variables $(X,\mathbf{Y},\mathbf{U},W,T,\hat{X}) \in \mathcal{W}_X^{CEO}(D)$
such that for all subsets ${\cal A} \subseteq \{ 1, 2, \ldots, M \},$
\begin{align}
\sum_{i \in {\cal A}}R_i &\geq  R_X(D)  - I(X; \mathbf{U}_{{\cal A}^c}|W,T) +  \sum_{i \in {\cal A}} I(Y_i;U_i|X,W,T) . \label{eq:wagner_sumrate_lb}
\end{align}
\end{cor}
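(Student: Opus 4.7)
My plan is to instantiate the tuple $(X,\mathbf{Y},\mathbf{U},W,T,\hat{X}) \in \mathcal{W}_X^{CEO}(D)$ guaranteed by Theorem~\ref{thm-wagnerA}, and to manipulate the right-hand side of the outer bound Inequality~\eqref{eq:wagner_outer_bd} term by term so that it matches the claimed lower bound in Inequality~\eqref{eq:wagner_sumrate_lb}. The final summand $\sum_{i \in {\cal A}} I(Y_i;U_i|X,W,T)$ already appears in the desired form, so only two pieces need attention: the first term $I(X;\mathbf{U},T)$, which should be lower bounded by $R_X(D)$, and the middle term $-I(X;\mathbf{U}_{{\cal A}^c}|T)$, which should be weakened to $-I(X;\mathbf{U}_{{\cal A}^c}|W,T)$.

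For the first replacement, I would invoke property $(iii)$ of $\mathcal{W}_X^{CEO}(D)$, namely the Markov chain $(X,W,T) \leftrightarrow (\mathbf{U},T) \leftrightarrow \hat{X}$, so that the data-processing inequality delivers $I(X;\mathbf{U},T) \geq I(X;\hat{X})$. The distortion constraint ${\mathbb E}[d(X,\hat{X})] \leq D$ built into $\mathcal{W}_X^{CEO}(D)$ then allows me to lower bound $I(X;\hat{X})$ by $R_X(D)$ through the standard single-letter definition of the rate-distortion function of $X$ with respect to $d(\cdot,\cdot)$.

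For the second replacement, I would apply the chain rule for mutual information twice to $I(X;\mathbf{U}_{{\cal A}^c}, W \mid T)$, yielding
$$I(X; \mathbf{U}_{{\cal A}^c} \mid W, T) - I(X; \mathbf{U}_{{\cal A}^c} \mid T) = I(X; W \mid \mathbf{U}_{{\cal A}^c}, T) - I(X; W \mid T).$$
Property $(i)$ states that $(W,T)$ is independent of $(X,\mathbf{Y})$, so in particular $I(X; W \mid T) = 0$; non-negativity of mutual information gives $I(X; W \mid \mathbf{U}_{{\cal A}^c}, T) \geq 0$. Hence $I(X;\mathbf{U}_{{\cal A}^c}|T) \leq I(X;\mathbf{U}_{{\cal A}^c}|W,T)$, and the minus sign in front of this term converts this into a valid weakening of the outer bound.

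Substituting the two bounds into Inequality~\eqref{eq:wagner_outer_bd} and leaving the last summand intact produces exactly Inequality~\eqref{eq:wagner_sumrate_lb}. I do not anticipate a substantive obstacle here: both replacements are routine consequences of the Markov and independence structure already encoded in the definition of $\mathcal{W}_X^{CEO}(D)$, and the essential combinatorial work has been done in Theorem~\ref{thm-wagnerA} itself. The only point to watch is that the same tuple must serve all subsets $\mathcal{A}$ simultaneously, but this is guaranteed since Theorem~\ref{thm-wagnerA} already asserts the existence of a single tuple valid for all $\mathcal{A}$.
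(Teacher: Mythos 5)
Your proposal is correct and matches the paper's own argument: the paper likewise derives $I(X;\mathbf{U},T) \ge I(X;\hat{X}) \ge R_X(D)$ from condition $(iii)$ and the distortion constraint, and establishes $I(X;\mathbf{U}_{{\cal A}^c}|T) \le I(X;\mathbf{U}_{{\cal A}^c}|W,T)$ via the same two-way chain-rule expansion together with $I(X;W|T)=0$ from condition $(i)$. No gaps.
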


\begin{proof}
Condition $(iii)$ in Theorem~\ref{thm-wagnerA} implies that $I(X;\mathbf{U},T) \geq I(X;\hat{X}) \geq R_X(D).$
Moreover, observe that
\begin{align}
\lefteqn{ I(X; \mathbf{U}_{{\cal A}^c}|T) + I(X; W | \mathbf{U}_{{\cal A}^c},T)} \nonumber \\
  & = I(X; W | T) +  I(X; \mathbf{U}_{{\cal A}^c}|W, T) 
\end{align}
and since $I(X; W | T)=0,$ we have $I(X; \mathbf{U}_{{\cal A}^c}|T) \le I(X; \mathbf{U}_{{\cal A}^c}|W,T).$
\end{proof}

To establish our lower bound, we start by considering
the following lemma. This is a generalization of the lemma proved by Oohama
\cite{oohama-it05} to the case of non-Gaussian sources.
\begin{lemma}
Let $r_i = I(Y_i;U_i|X,W,T)$ and ${\cal A} \subseteq \{1,\ldots,M\}$. Then
\begin{equation}
e^{2 I(X; \mathbf{U}_{\cal A}|W,T)} \leq \frac{\entp{\YA}}{\sA/|{\cal A}|} - \entp{X}\sum_{i \in {\cal A}} \frac{e^{-2r_i}}{\sigma_{Z_i}^2}.\label{generalized-oohama-bound}
\end{equation}

\begin{proof}
Since $(W,T)$ is independent of $(X, \mathbf{Y})$ when condition $(i)$ in Theorem~\ref{thm-wagnerA} holds,
we know that we preserve the Markov chain $X \rightarrow \YA \rightarrow \mathbf{Y}_{\cal A} \rightarrow \mathbf{U}_{\cal A}$ when we condition on any
realization of $(W,T).$
Therefore, we can again use Theorem 1 of~\cite{Courtade:it18} to infer
\begin{align}
\lefteqn{e^{2 h(\YA)}e^{-2 I(X; \mathbf{U}_{\cal A}|W=w,T=t)}} \nonumber \\
 &\geq e^{2 h(X)}e^{-2 I(\YA ; \mathbf{U}_{\cal A}|W=w,T=t)}+ e^{2 h(\ZA)} \nonumber \\
&= \frac{e^{2 h(X)}}{e^{2h(\YA)}}e^{2 h(\YA| \mathbf{U}_{\cal A},W=w,T=t)}+ e^{2 h(\ZA)}. \label{eq:oohama_gen1}
\end{align}
Now,
\begin{align}
h(\YA| &\mathbf{U}_{\cal A},W=w,T=t) \notag \\ = &h(\YA|\mathbf{U}_{\cal A}, X,W=w,T=t)\nonumber \\
 &+I(X;\YA|\mathbf{U}_{\cal A},W=w,T=t) \\
= &h(\YA| \mathbf{U}_{\cal A}, X,W=w,T=t)\nonumber \\
& + I(X;\YA,\mathbf{U}_{\cal A}|W=w,T=t) \notag \\
&-I(X;\mathbf{U}_{\cal A}|W=w,T=t) \\
= &h(\YA| \mathbf{U}_{\cal A}, X,W=w,T=t)\nonumber \\
& +
I(X;\YA|W=w,T=t) \notag \\
&- I(X;\mathbf{U}_{\cal A}|W=w,T=t) \label{eq:oohama_suff_stat}\\
= &h(\YA| \mathbf{U}_{\cal A}, X,W=w,T=t)+ I(X;\YA)\notag \\ 
&- I(X;\mathbf{U}_{\cal A}|W=w,T=t), \label{eq:oohama_gen2}
\end{align}
where~\eqref{eq:oohama_suff_stat} follows from the Markov chain $X \rightarrow \YA \rightarrow \mathbf{Y}_{\cal A} \rightarrow \mathbf{U}_{\cal A}$
and~\eqref{eq:oohama_gen2} from Theorem~\ref{thm-wagnerA}, Item ($i$). The next step is to bound
$h(\YA| \mathbf{U}_{\cal A}, X,W=w,T=t)$. We note that we can write
\begin{align}
\lefteqn{e^{2 h(\YA| \mathbf{U}_{\cal A}, X,W=w,T=t)}} \nonumber \\
 &= e^{2 h\left( \left. \frac{1}{|{\cal A}|} \sum_{i \in{\cal A}} \frac{\sA}{\sigma_{Z_i}^2} Y_i \right| \mathbf{U}_{\cal A}, X,W=w,T=t\right)} \\
&\geq \sum_{i \in {\cal A}} \left( \frac{\sA}{\sigma_{Z_i}^2}\right)^2 \frac{e^{2 h(Y_i| U_i, X,W=w,T=t)}}{|{\cal A}|^2}, \label{eq:oohama_gen3}
\end{align}
where (\ref{eq:oohama_gen3}) follows by Item ($ii$) of Theorem~\ref{thm-wagnerA} and the entropy power
inequality. Moreover,
\begin{align}
\lefteqn{h(Y_i| U_i, X,W=w,T=t) } \nonumber \\
&= h(Y_i|X,W=w,T=t)-I(Y_i;U_i|X,W=w,T=t) \nonumber \\
&= h(Z_i) - I(Y_i;U_i|X,W=w,T=t). \label{eq:oohama_gen4}
\end{align}
Combining~\eqref{eq:oohama_gen1},~\eqref{eq:oohama_gen2},~\eqref{eq:oohama_gen3}, and~\eqref{eq:oohama_gen4} gives
\begin{align}
\lefteqn{ e^{2 h(\YA)}e^{-2I(X; \mathbf{U}_{\cal A}|W=w,T=t)} } \nonumber \\
 \geq & \frac{e^{2 h(X)}}{e^{2h(\YA)}} \cdot \frac{e^{2I(X;\YA)}}{e^{2 I(X;\mathbf{U}_{\cal A}|W=w,T=t)}} \notag
\\
&\cdot \sum_{i \in {\cal A}} \left( \frac{\sA}{\sigma_{Z_i}^2}\right)^2 \frac{e^{2h(Z_i)-2I(Y_i;U_i|X,W=w,T=t)}}{|{\cal A}|^2} + e^{2 h(\ZA)} .
\end{align}
Solving for $e^{2I(X; \mathbf{U}_{\cal A}|W=w,T=t)}$ and noting that $e^{2h(Z_i)}/\sigma_{Z_i}^2  = 2\pi e,$ we get that
\begin{align}
\lefteqn{e^{2 I(X; \mathbf{U}_{\cal A}|W=w,T=t)}} \nonumber \\
 &\leq e^{-2 h(\ZA)} \nonumber \\
 & ~\cdot \left[ e^{2 h(\YA)}- \frac{e^{2h(X)}e^{2 I(X;\YA)}}{e^{2h(\YA)}}\right. \nonumber \\
& \,\,\,\,\,\, \,\,\,\,\,\,\,\,\,\,\,\,\,\,\,\,\,\,\,\,\,\,\,\,\,\,\,\,\,\,\,\,\,\, \left. 2 \pi e \left( \frac{\sA}{|{\cal A}|}\right)^2 \sum_{i \in {\cal A}}
\frac{e^{-2 I(Y_i;U_i|X,W=w,T=t)}}{ \sigma_{Z_i}^2}\right] \nonumber \\
&= \frac{e^{2 h(\YA)}}{e^{2 h(\ZA)}} \nonumber \\
&\,\,\,\,\,\,\,\,\,\,\,\,\,\,\,  - \frac{e^{2 h(X)}}{e^{4 h(\ZA)}}2 \pi e \left( \frac{\sA}{|{\cal A}|}\right)^2\sum_{i \in {\cal A}}\frac{e^{-2 I(Y_i;U_i|X,W=w,T=t)}}{\sigma_{Z_i}^2} \nonumber \\
& =  \frac{\entp{\YA}}{\sA/|{\cal A}|} - \entp{X}\sum_{i \in {\cal A}}\frac{e^{-2 I(Y_i;U_i|X,W=w,T=t)}}{\sigma_{Z_i}^2},
\end{align}
where we have used that $e^{2 h(\ZA)} = 2 \pi e \frac{\sA}{|{\cal A}|},$ a direct consequence of Equation~\eqref{eq-def-ZA}.
We complete the proof by taking the expectation over $W,T$ and applying
Jensen's inequality twice, once to the left-hand side and once to the
right-hand side.
\end{proof}
\end{lemma}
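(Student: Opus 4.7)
The plan is to work conditionally on $(W=w,T=t)$ throughout and derive a pointwise inequality that I will afterwards average. Under such conditioning, the independence assumption (i) of Theorem~\ref{thm-wagnerA} preserves the fact that $\YA = X + \ZA$ is a Gaussian additive noise channel in which $\ZA$ has variance $\sA/|{\cal A}|$, and condition (ii) preserves the long Markov chain $X \to \YA \to \mathbf{Y}_{\cal A} \to \mathbf{U}_{\cal A}$. This is exactly the setup where the sharp EPI-type bound of \cite{Courtade:it18} applies, giving
\begin{align}
e^{2h(\YA)}\,e^{-2I(X;\mathbf{U}_{\cal A}|w,t)} \;\ge\; e^{2h(X)}\,e^{-2I(\YA;\mathbf{U}_{\cal A}|w,t)} + e^{2h(\ZA)}.
\end{align}

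The first algebraic manoeuvre is to rewrite $I(\YA;\mathbf{U}_{\cal A}|w,t)=h(\YA)-h(\YA|\mathbf{U}_{\cal A},w,t)$ and then, using both that $\YA$ is a sufficient statistic for $X$ given $\mathbf{Y}_{\cal A}$ and that $(W,T)$ is independent of $(X,\mathbf{Y})$, to further decompose
\begin{align}
h(\YA|\mathbf{U}_{\cal A},w,t) = h(\YA|\mathbf{U}_{\cal A},X,w,t) + I(X;\YA) - I(X;\mathbf{U}_{\cal A}|w,t).
\end{align}
Substituting this identity back into the Courtade bound and collecting the $I(X;\mathbf{U}_{\cal A}|w,t)$ terms turns the inequality into an upper bound on $e^{2I(X;\mathbf{U}_{\cal A}|w,t)}$ expressed in terms of $h(\YA|\mathbf{U}_{\cal A},X,w,t)$ and Gaussian constants.

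Next I would lower bound $h(\YA|\mathbf{U}_{\cal A},X,w,t)$ by the (conditional) entropy power inequality. Writing $\YA=\tfrac{1}{|{\cal A}|}\sum_{i\in{\cal A}}\tfrac{\sA}{\sigma_{Z_i}^2}Y_i$, conditioning on $X$ turns the $Y_i$ into independent Gaussian shifts, while the Markov structure demanded by condition (ii) of Theorem~\ref{thm-wagnerA} ensures that the $Y_i$ remain conditionally independent given $(\mathbf{U}_{\cal A},X,W,T)$. The EPI then yields
\begin{align}
e^{2h(\YA|\mathbf{U}_{\cal A},X,w,t)} \ge \frac{1}{|{\cal A}|^2}\sum_{i\in{\cal A}}\left(\frac{\sA}{\sigma_{Z_i}^2}\right)^2 e^{2h(Y_i|U_i,X,w,t)},
\end{align}
and the individual entropies simplify via $h(Y_i|U_i,X,w,t)=h(Z_i)-I(Y_i;U_i|X,w,t)$, making the Gaussian structure explicit.

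Plugging this back, solving for $e^{2I(X;\mathbf{U}_{\cal A}|w,t)}$, and using the Gaussian entropy identities $e^{2h(Z_i)}=2\pi e\,\sigma_{Z_i}^2$ and $e^{2h(\ZA)}=2\pi e\,\sA/|{\cal A}|$ produces the desired bound pointwise in $(w,t)$, with $I(Y_i;U_i|X,w,t)$ playing the role of $r_i$. To pass to the unconditional statement, I would take expectation over $(W,T)$ and apply Jensen's inequality on each side: convexity of $x\mapsto e^{2x}$ gives $e^{2I(X;\mathbf{U}_{\cal A}|W,T)}\le \mathbb{E}[e^{2I(X;\mathbf{U}_{\cal A}|w,t)}]$, while convexity of $x\mapsto e^{-2x}$ gives $\mathbb{E}[e^{-2I(Y_i;U_i|X,w,t)}]\ge e^{-2r_i}$; because this latter quantity appears with a minus sign on the right, both applications push in the direction that preserves the inequality. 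The step I expect to be most delicate is verifying that the conditional independence required for the EPI application truly follows from condition (ii) of Theorem~\ref{thm-wagnerA}, and confirming that the two Jensen applications are compatible with the signs of the constants produced by the Courtade bound; everything else is bookkeeping on Gaussian entropies and sufficient statistics.
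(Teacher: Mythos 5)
Your proposal is correct and follows essentially the same route as the paper's own proof: conditioning on $(W,T)$, applying Courtade's strong EPI, the identical decomposition of $h(\YA\mid\mathbf{U}_{\cal A},W{=}w,T{=}t)$ via sufficiency of $\YA$ and independence of $(W,T)$, the conditional EPI over the agents in ${\cal A}$ using condition $(ii)$, and the final double application of Jensen's inequality with the sign analysis you describe. No substantive differences.
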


\begin{proof}[Proof of Theorem~\ref{thm-CEO-aux}]
Let $r_i = I(Y_i;U_i|X,W,T).$
First, if ${\cal A} = \{ 1, 2, \ldots, M \},$ then Inequality~\eqref{eq:wagner_sumrate_lb} directly becomes the claimed inequality.
Additionally, for every (strict) subset ${\cal A} \subset \{ 1, 2, \ldots, M \},$
substitute the bound~\eqref{generalized-oohama-bound} into~\eqref{eq:wagner_sumrate_lb} to obtain
\begin{align}
\sum_{i \in {\cal A}}R_i &\geq   R_X(D) \nonumber \\
&  - \frac{1}{2} \log \left( \frac{\entp{\YAc}}{\sAc/|{\cal A}^c|} - \entp{X} \sum_{i \in {\cal A}^c} \frac{e^{-2r_i}}{\sigma_{Z_i}^2} \right) +  \sum_{i \in {\cal A}}  r_i.
\end{align}
It is also important to observe that the argument inside the logarithm is guaranteed to be at least $1$ for all non-negative choices of $r_i.$ To see this, note that
\begin{align}
\lefteqn{\frac{\entp{\YAc}}{\sAc}- \frac{\entp{X}}{|{\cal A}^c|}\sum_{i \in {\cal A}^c} \frac{e^{-2r_i}}{\sigma_{Z_i}^2} } \nonumber \\
& \ge \frac{\entp{\YAc}}{\sAc}- \frac{\entp{X}}{|{\cal A}^c|}\sum_{i \in {\cal A}^c} \frac{1}{\sigma_{Z_i}^2} \\
&= \frac{\entp{\YAc}}{\sAc}- \frac{\entp{X}}{\sAc} \\
& \ge \frac{(\entp{X}+\sAc) - \entp{X}}{\sAc} = 1,
\end{align}
where the last step is due to the entropy power inequality $\entp{\YAc} \ge \entp{X}+\entp{\ZAc},$ and the fact that $\entp{\ZAc} = \sAc.$
Hence, the expression inside the logarithm is lower bounded by $|{\cal A}^c|.$

This has to hold simultaneously (that is, for a fixed $r_i = I(Y_i;U_i|X,W,T)$) for all subsets ${\cal A.}$
This implies that if $(R_1, R_2, \ldots, R_M) \in {\cal R}_{CEO}(D),$
then there must exist non-negative numbers $r_i$ such that the above inequalities are satisfied for all choices of ${\cal A.}$
\end{proof}

\subsection{Proof of Theorem~\ref{thm-AWGN-CEO-generallowerbound} and of the lower bound in Corollary~\ref{thm-AWGN-CEO}}\label{proof:thm-AWGN-CEO-lower}

In this subsection, we leverage Theorem~\ref{thm-CEO-aux} to establish the bounds of Theorem~\ref{thm-AWGN-CEO-generallowerbound} and Corollary~\ref{thm-AWGN-CEO}, {\it i.e.,} the case of equal noise variances. For that case, we relax Theorem~\ref{thm-CEO-aux} to only include the empty set (that is, ${\cal A} = \emptyset$)
and the complete set (that is, ${\cal A} = \{1, 2, \dots, M \}$).
Specifically, for ${\cal A} = \emptyset,$ we find
\begin{align}
0  \geq & R_X(D)    - \frac{1}{2} \log \left( \frac{M\entp{\YM}}{\sigma_{Z}^2}- \frac{\entp{X}}{\sigma_{Z}^2}\sum_{i=1}^M {e^{-2 r_i}} \right). 
\end{align}
Equivalently,
\begin{align}
\frac{1}{ \frac{1}{M}\sum_{i =1}^M {e^{-2 r_i}} } & \geq  \frac{M\entp{X}}{M\entp{\YM}  -  \sigma_{Z}^2 e^{2 R_X(D)}}. \label{eq-aux-proof}
\end{align}
From Jensen's inequality, we have
\begin{align}
\frac{1}{M}\sum_{i=1}^M r_i & \geq   \frac{1}{2}\log \frac{1}{ \frac{1}{M}\sum_{i =1}^M {e^{-2 r_i}} } .
\end{align}
Restricting attention to those values of $D$ for which $M\entp{\YM}  -  \sigma_{Z}^2 e^{2 R_X(D)} \ge 0$ ensures that the denominator on the right hand side of~\eqref{eq-aux-proof} is non-negative. Thus, for such values of $D,$ we find (for ${\cal A} = \emptyset$) that
\begin{align}
\sum_{i=1}^M r_i & \geq  \frac{M}{2} \log^+ \frac{M\entp{X}}{M\entp{\YM}  -  \sigma_{Z}^2 e^{2 R_X(D)}}. \label{eq-subsetchoice-empty}
\end{align}
For ${\cal A} = \{1, 2, \dots, M \},$ we have
\begin{align}
\sum_{i =1}^M R_i  & \geq  R_X(D)  + \sum_{i =1}^M r_i . \label{eq-subsetchoice-full}
\end{align}
Since, by Theorem~\ref{thm-CEO-aux}, there must exist non-negative real numbers $\{r_1, r_2, \ldots, r_M\}$
such that conditions~\eqref{eq-subsetchoice-empty} and~\eqref{eq-subsetchoice-full} are satisfied simultaneously, we conclude
\begin{align}
\sum_{i =1}^M R_i 
  & \geq R_X(D) + \frac{M}{2} \log^+ \frac{M\entp{X}}{M\entp{\YM}  -  \sigma_{Z}^2e^{2 R_X(D)}},
\end{align}
which completes the proof of Theorem~\ref{thm-AWGN-CEO-generallowerbound}.
The lower bound in Corollary~\ref{thm-AWGN-CEO} then follows directly by lower bounding the terms $R_X(D)$ using the lower bound in Inequality~\eqref{eq:direct_lower_bd}.

\subsection{Achievability results for the AWGN CEO Problem (Corollaries~\ref{Cor-CEO-aux-MMSE} and~\ref{thm-AWGN-CEO})}\label{proof:thm-AWGN-CEO-upper}

The achievability results mentioned in this paper all follow from the Berger-Tung region~\cite{berger-msc_lecturenotes,tungthesis-77}.
While these results were originally for discrete memoryless sources and bounded distortion measures, they have been extended to abstract alphabets and suitably smooth distortion functions \cite{housewright-77}, including mean-squared error~\cite{wyner-ic78:wynerziv}.
A detailed analysis for the case of the quadratic Gaussian CEO problem is given in the work of Oohama~\cite{oohama-it98,oohama-it05}. This analysis directly extends to the case of the AWGN CEO problem with Gaussian auxiliaries, as we now briefly explain.
Exactly as in~\cite{oohama-it98,oohama-it05}, we consider a random coding argument where the codebooks are drawn via auxiliary random variables $U_i,$ for $i=1, 2, \ldots, M,$ where
\begin{align}
 U_i &= Y_i + V_i, \label{Eq-defn-Ui}
\end{align}
where $Y_i=X+Z_i$ is the noisy observation of encoder $i,$ and $V_i$ is an independent zero-mean Gaussian of variance $\sigma_{V_i}^2.$
The centerpiece of the proof is the so-called Markov lemma~\cite[Lemma~5]{oohama-it98}, whose proof only uses the fact that {\it conditioned}  on the underlying source sequence, the noisy observations and the auxiliary codebook random variables are Gaussian. Clearly, this still applies in our case, even if $X$ is not Gaussian. This leads to the rate region in~\cite[Equation~(6)]{oohama-it05}, which coincides exactly with~\eqref{eq-thm-CEO-ach-Oohama}, establishing a proof sketch for the achievability part of Corollary~\ref{Cor-CEO-aux-MMSE}. (As a side note, we point out that for non-Gaussian sources $X,$ in general, we can find tighter upper bounds by using auxiliaries $U_i$ of a form different from~\eqref{Eq-defn-Ui}, but this is outside the scope of the present paper. Such arguments are developed for different settings, {\it e.g.,} in~\cite{SanderovichSSK:it08,DBLP:journals/corr/AguerriZCS17}.)

In the remainder, we provide an explicit calculation for the achievability result in Corollary~\ref{thm-AWGN-CEO}, that is, for the sum-rate in the case of equal noise variances (see also the arguments in~\cite{chen-jsac03}). Specifically, we have
\begin{equation}
R_X^{CEO}(D) \leq I(\mathbf{Y}; \mathbf{U}), \label{eq_CEO_SUMRATEDIST_inner}
\end{equation}
where $\mathbf{U} = (U_1, U_2, \ldots, U_M),$ as above. If all noise variances are equal, it is intuitive that a good choice is to also set all the variances $\sigma_{V_i}^2=\sigma_V^2$ to be equal. Then, following standard arguments (see e.g.~\cite[Section~IV]{oohama-it05}), the corresponding distortion is no larger than
\begin{align}
  \min_g {\mathbb E}\left[ \left( X-g(\mathbf{U})\right)^2 \right] & \le  \frac{\sigma_X^2(\sigma_Z^2 + \sigma_V^2)}{M \sigma_X^2 + \sigma_Z^2 + \sigma_V^2},
\end{align}
where the right hand side is the distortion incurred by the optimal linear estimator.
Hence, choosing $\sigma_V^2$ such that the right hand side of this equation equals $D$ characterizes a valid distribution.
It remains to upper bound the corresponding value of $I(\mathbf{Y}; \mathbf{U}).$
\begin{align}
  I(\mathbf{Y}; \mathbf{U}) & =   h({\bf U}) - h({\bf U}|{\bf Y}) \\
    & = h({\bf U}) - \frac{M}{2} \log (2\pi e) \sigma_V^2.
\end{align}
The proof is completed by a standard maximum entropy upper bound on the term $h({\bf U}).$
Specifically, observe that the covariance matrix of the vector $h({\bf U})$ is the $M\times M$ matrix
with entries $\sigma_X^2 + \sigma_Z^2 + \sigma_V^2$ on the diagonal and $\sigma_X^2$ everywhere else.
From a standard maximum entropy argument (subject to a fixed covariance matrix),
we thus find
\begin{align}
 { h({\bf U})} & \le  \frac{1}{2} \log (2 \pi e)^M \left( \left( M\sigma_X^2 + \sigma_Z^2 + \sigma_V^2\right) \left(\sigma_Z^2 + \sigma_V^2 \right)^{M-1} \right) . 
\end{align}
Thus,
\begin{align}
  I(\mathbf{Y}; \mathbf{U}) &\le   \frac{1}{2} \log \left( \frac{ \left( M\sigma_X^2 + \sigma_Z^2 + \sigma_V^2\right) \left(\sigma_Z^2 + \sigma_V^2 \right)^{M-1}}{ \sigma_V^{2M}} \right)  \\
  & =  \frac{1}{2} \log \left( \frac{  M\sigma_X^2 + \sigma_Z^2 + \sigma_V^2}{\sigma_Z^2+\sigma_V^2} \cdot \frac{ \left(\sigma_Z^2 + \sigma_V^2 \right)^{M}}{ \sigma_V^{2M}} \right). 
\end{align}  
Recall that $\sigma_V^2$ is chosen such that we have $D\le \sigma_X^2(\sigma_Z^2 + \sigma_V^2)/(M \sigma_X^2 + \sigma_Z^2 + \sigma_V^2),$
thus,
 \begin{align}
  I(\mathbf{Y}; \mathbf{U}) &\le  \frac{1}{2} \log^+ \frac{\sigma_X^2}{D} + \frac{M}{2} \log \frac{\sigma_Z^2 + \sigma_V^2 }{ \sigma_V^{2}},
\end{align}
and finally, we note that the relation $D\le\frac{\sigma_X^2(\sigma_Z^2 + \sigma_V^2)}{M \sigma_X^2 + \sigma_Z^2 + \sigma_V^2}$
can be rewritten equivalently as
\begin{align}
 \frac{\sigma_Z^2 + \sigma_V^2 }{ \sigma_V^{2}} & \le   \frac{ M \sigma_X^2 }{ M \sigma_X^2 + \sigma_Z^2 - \frac{\sigma_X^2}{D}\sigma_Z^2},
\end{align}
which completes the explicit proof of Inequality~\eqref{eq:ceo_rd_upperbd_sqer} in Corollary~\ref{thm-AWGN-CEO}.

\subsection{Proof of Equations~\eqref{Eq-limit-upperlower-constantD} and~\eqref{Eq-gap-upperlower}}\label{App-Eq-gap-upperlower}

Recall the definition of $\YM$ from Equation~\eqref{eq-def-YM}, namely,
\begin{align}
\YM & = X + \ZM,
\end{align}
and recall that $\ZM$ is a zero-mean Gaussian, independent of $X$, of variance $\sigma_Z^2/M.$
We can leverage~\cite{costa-it85} where it is proved that $\entp{X + \ZM}$ is a concave function of the variance of $\ZM,$ that is, of $\sigma_Z^2/M.$
Therefore, using the definition given in Equation~\eqref{Eq-def-kappaX}, we have
\begin{align}
\entp{\YM} = \entp{X + \ZM} & \le \entp{X} + \kappa_X(\sigma_Z^2/M). \label{Eq-entpYM-upperbd-kappa}
\end{align}
For both Equations~\eqref{Eq-limit-upperlower-constantD} and~\eqref{Eq-gap-upperlower}, we use Equation~\eqref{Eq-entpYM-upperbd-kappa} to lower bound $R_{X, lower}^{CEO}(D).$ Moreover, in the formula for $R_{X, lower}^{CEO}(D),$ we change both $\log^+$ to $\log,$ which cannot increase their values.
For $R_{X,upper}^{CEO}(D),$ we recall that $\sigma_{\YM}^2 = \sigma_X^2 + \sigma_Z^2/M.$

Now, for Equation~\eqref{Eq-limit-upperlower-constantD}, we observe that if $D<\sigma_X^2$ and $M$ is large enough, then the arguments of both logarithms in $R_{X,upper}^{CEO}(D)$ are larger than one. Hence, for such $M,$ we can upper bound the difference as
\begin{align}
\lefteqn{R_{X,upper}^{CEO}(D) - R_{X, lower}^{CEO}(D) } \nonumber \\
& \le \frac{1}{2}\log \frac{\sigma_X^2}{\entp{X}}  + \frac{M}{2}\log \frac{M\sigma_X^2(M\entp{X} + \kappa_X\sigma_Z^2-\frac{\entp{X}}{D}\sigma_{Z}^2)}{M \entp{X}(M\sigma_X^2 + \sigma_Z^2 - \frac{\sigma_X^2}{D}{\sigma_{{Z}}^2})} .
\end{align}
Using the standard bound $\log(1+x) \le x$ and letting $M$ tend to infinity gives the claimed formula.

For Equation~\eqref{Eq-gap-upperlower}, we start by noting that since we assume $M > d/\sigma_X^2,$ we have that $D = d/M < \sigma_X^2.$
Hence, the arguments of both logarithms in the formula for $R_{X,upper}^{CEO}(D=d/M)$ are larger than one.
Therefore, we can bound the difference as follows:
\begin{align}
\lefteqn{R_{X,upper}^{CEO}(d/M) - R_{X, lower}^{CEO}(d/M)} \nonumber \\
 &\le \frac{1}{2}\log \frac{\sigma_X^2}{\entp{X}} + \frac{M}{2}\log \frac{\sigma_X^2(\entp{X} + \kappa_X\sigma_Z^2/M-\frac{\entp{X}}{d}\sigma_{Z}^2)}{ \entp{X}(\sigma_X^2 + \sigma_Z^2/M - \frac{\sigma_X^2}{d}{\sigma_{{Z}}^2})} \nonumber \\
& \le \frac{1}{2}\log \frac{\sigma_X^2}{\entp{X}} + \frac{M}{2}  \frac{ (\kappa_X\sigma_X^2-\entp{X})\sigma_Z^2}{M \sigma_X^2 \entp{X} (1-\sigma_Z^2/d) + \sigma_Z^2}  
\end{align}
where we used the standard bound $\log(1+x) \le x.$
Further upper bounding by dropping the trailing $\sigma_Z^2$ in the denominator establishes Equation~\eqref{Eq-gap-upperlower}.

\section{Proof of Theorem~\ref{lemma-priceofdistributed}}\label{app-proof-Sec-implications}

For centralized encoding, we note that the scenario is precisely the CEO problem with $M=1$ and with a reduced noise variance of $\sigma_Z^2/M.$
Hence, we may use the upper bound in Inequality~\eqref{eq-EPI-explicit-upper} to find
\begin{align}
R_{X}^{R}(D)
 & \leq 
\frac{1}{2}\log^+\frac{\sigma_X^2}{D}  + \frac{1}{2}\log^+ \frac{\sigma_X^2}{\sigma_{\YM}^2 - \frac{\sigma_X^2\sigma_Z^2}{M D}}.
\end{align}
Parameterizing the distortion $D$ as follows
\begin{align}
D = D_\alpha &= \alpha \frac{\sigma_X^2\sigma_Z^2}{M\sigma_{\YM}^2},
\end{align}
we can express the upper bound as
\begin{align}
R_{X}^{R}(D_\alpha)
 & \leq \frac{1}{2}\log^+ \frac{M\sigma_{\YM}^2}{\alpha\sigma_Z^2} +
\frac{1}{2}\log^+\left( \frac{\sigma_X^2}{\sigma_{\YM}^2 (1-\frac{1}{\alpha})}\right). 
\end{align}
As long as $\alpha < M \frac{\sigma_X^2}{\sigma_Z^2}+1,$ this can be combined into
\begin{align}
R_{X}^{R}(D_\alpha)
 & \leq 
\frac{1}{2}\log\left( \frac{M\sigma_X^2}{\sigma_Z^2  (\alpha-1)}\right),
\label{eq-rate-upper-centralized}
\end{align}
where we note that the argument inside the logarithm is lower bounded by one under the stated conditions on $\alpha.$
For distributed (CEO) encoding, we evaluate the lower bound in Corollary~\ref{thm-AWGN-CEO}.
By analogy, we parameterize $\tilde{D}_\beta = \beta \entp{X}\sigma_Z^2/(M\entp{\YM}).$ 
Specifically, we will choose $\beta = \alpha \sigma_X^2/\entp{X},$ which ensures $\tilde{D}_\beta \ge D_\alpha$ (as well as $\beta > 1$).
With such $\beta,$ we get
\begin{align}
\lefteqn{R_{X}^{CEO}(D_\alpha)  \ge R_{X}^{CEO}(\tilde{D}_\beta) } \nonumber \\
& \ge \frac{1}{2} \log^+ \frac{M\entp{\YM}}{\beta \sigma_{Z}^2}   + \frac{M}{2}
\log^+ \left( \frac{ \entp{X}}{\entp{\YM} (1-\frac{1}{\beta} )} \right). \label{eq-rate-lower-distributed}
\end{align}
We further lower bound this by using $\entp{\YM} \ge \entp{X}$ in the first logarithm.
In the second logarithm, we use the assumption that $\kappa_X < \infty,$ which implies via Equation~\eqref{Eq-entpYM-upperbd-kappa}
that\footnote{Alternatively, {\it without}  the assumption that $\kappa_X<\infty,$ we could upper bound as $\entp{\YM} \le \sigma_{\YM}^2 = \sigma_X^2 + \frac{1}{M} \sigma_Z^2.$ This leads to a weaker, but not vacuous bound.}
 $\entp{\YM}\le \entp{X} + \kappa_X(\sigma_Z^2/M).$
We plug in $\beta  = \alpha \sigma_X^2/\entp{X}.$
Moreover, we further lower bound the first logarithm by changing the $\log^+$ to $\log,$ which thus leads to
\begin{align}
R_{X}^{CEO}(D_\alpha)  & \ge \frac{1}{2} \log \frac{M(\entp{X})^2}{\alpha \sigma_X^2 \sigma_{Z}^2} \nonumber \\
& + \frac{M}{2} \log^+ \left( \frac{ \alpha \sigma_X^2 \entp{X}}{(\entp{X} + \frac{\kappa_X}{M} \sigma_Z^2) (\alpha \sigma_X^2 - \entp{X} )} \right).
\end{align}
Subtracting Inequality~\eqref{eq-rate-upper-centralized} from this expression gives the claimed formula.

\section{Proof of Theorem~\ref{thm-jscc}}\label{App-thm-jscc}

We start by proving Inequality~\eqref{Eq-scalinglaw-digital}.
Let us loosen the lower bound in Corollary~\ref{thm-AWGN-CEO} (Inequality~\eqref{eq:ceo_rd_lowerbd_sqer}) to
\begin{align}
R_X^{CEO}(D) &\geq \frac{M}{2} \log^+ \frac{M \entp{X}}{M\entp{\YM}-\frac{\entp{X}}{D}\sigma_{Z}^2 } . 
\end{align}
Next, we set this lower bound equal to $R,$ the total communication (sum) rate available over the multiple-access channel.
Then, we find
\begin{align}
D & \ge  \frac{\entp{X}\sigma_{Z}^2}{ M\entp{X}(1-\exp(-2R/M))+ M(\entp{\YM}-\entp{X})}.
\end{align}
To simplify further, we observe that
$M(1-\exp(-2R/M)) \le 2R,$ hence,
\begin{align}
D & \ge  \frac{\entp{X}\sigma_Z^2}{2 \entp{X} R + M(\entp{\YM}- \entp{X})}.
\end{align}
To bound the total rate $R$ available over the multiple-access channel is a somewhat subtle issue due to the fact that the messages produced by the source code may be dependent. Here, we merely bound this total rate by the rate for a corresponding vector (or ``multiple-antenna'') channel, where the channel input is thus a vector of length $M$ (and the channel output is a scalar, equal to the sum of the $M$ inputs plus noise).
For such a system, it is well known that
\begin{align}
R & \le  \frac{1}{2} \log \left( 1 + \frac{M^2 P}{\sigma^2_{channel}} \right),
\end{align}
where $P$ is the transmit power per user on the multiple-access channel.
Hence, for a digital strategy in the sense discussed here, the resulting distortion is at least
\begin{align}
D_{\mathrm{d}} & \ge  \frac{\entp{X}\sigma_Z^2}{ \entp{X}\log \left( 1 + \frac{M^2 P}{\sigma^2_{channel}} \right) + M(\entp{\YM}-\entp{X})}.
\end{align}
Finally, using the upper bound from Equation~\eqref{Eq-entpYM-upperbd-kappa}, we find that $M(\entp{\YM}-\entp{X}) \le  \kappa_X\sigma_Z^2,$
which completes the proof of Inequality~\eqref{Eq-scalinglaw-digital}.

For Equation~\eqref{Eq-scalinglaw-analog}, we proceed as follows:
At time $i$, encoder $m$ sets $U_{m}[i] =  \sqrt{\frac{P}{\sigma_X^2+\sigma_Z^2}}Y_{m}[i].$
Clearly, this satisfies the constraint in Equation~\eqref{Eq-PowerCon} and is thus a valid strategy.
At the receiver, we set $\hat{X}[i]$ equal to the linear mean-squared error estimator of $X[i]$ based on $V[i],$
which is well known to be
\begin{align}
 \hat{X}[i] & =  \frac{\sqrt{\frac{P}{\sigma_X^2+\sigma_Z^2}}M \sigma_X^2}{\frac{P}{\sigma_X^2+\sigma_Z^2}(M^2\sigma_X^2+M\sigma_Z^2) + \sigma_{channel}^2} V[i] .
\end{align}
A direct calculation reveals that for each $i,$ we have
\begin{align}
    \lefteqn{ {\mathbb E}\left[ ( X[i] - \hat{X}[i] )^2 \right] } \nonumber \\
    & =   \frac{ \sigma_X^2 \sigma_N^2}{M \sigma_X^2 + \sigma_N^2 }
\left( 1 +  \frac{ M ( \sigma_X^2\sigma_{channel}^2/ \sigma_N^2 )  }{\frac{M\sigma_X^2+\sigma_N^2}{\sigma_X^2+\sigma_N^2} MP +\sigma_{channel}^2} \right),
\end{align}
which thus establishes Equation~\eqref{Eq-scalinglaw-analog}.

\bibliographystyle{IEEEtran}

\begin{thebibliography}{10}
\providecommand{\url}[1]{#1}
\csname url@samestyle\endcsname
\providecommand{\newblock}{\relax}
\providecommand{\bibinfo}[2]{#2}
\providecommand{\BIBentrySTDinterwordspacing}{\spaceskip=0pt\relax}
\providecommand{\BIBentryALTinterwordstretchfactor}{4}
\providecommand{\BIBentryALTinterwordspacing}{\spaceskip=\fontdimen2\font plus
\BIBentryALTinterwordstretchfactor\fontdimen3\font minus
  \fontdimen4\font\relax}
\providecommand{\BIBforeignlanguage}[2]{{%
\expandafter\ifx\csname l@#1\endcsname\relax
\typeout{** WARNING: IEEEtran.bst: No hyphenation pattern has been}%
\typeout{** loaded for the language `#1'. Using the pattern for}%
\typeout{** the default language instead.}%
\else
\language=\csname l@#1\endcsname
\fi
#2}}
\providecommand{\BIBdecl}{\relax}
\BIBdecl

\bibitem{GelfandP:79}
S.~I. {Gel'fand} and M.~S. Pinsker, ``Coding of sources on the basis of
  observations with incomplete information,'' \emph{Problems of Information
  Transmission}, vol.~15, pp. 115--125, 1979.

\bibitem{FlynnG:87}
T.~J. Flynn and R.~M. Gray, ``Encoding of correlated observations,'' \emph{IEEE
  Transactions on Information Theory}, vol.~33, no.~6, pp. 773--787, Nov. 1987.

\bibitem{bergerzhangvis-it96}
T.~Berger, Z.~Zhang, and H.~Viswanathan, ``The {CEO} problem,'' \emph{IEEE
  Transactions on Information Theory}, vol.~42, pp. 887--902, May 1996.

\bibitem{viswanathan-it97}
H.~Viswanathan and T.~Berger, ``The quadratic {Gaussian CEO} problem,''
  \emph{IEEE Transactions on Information Theory}, vol.~43, pp. 1549--1559, Sep.
  1997.

\bibitem{oohama-it98}
Y.~Oohama, ``The rate-distortion function for the quadratic {Gaussian CEO}
  problem,'' \emph{IEEE Transactions on Information Theory}, vol.~44, pp.
  1057--1070, May 1998.

\bibitem{oohama-it05}
------, ``Rate-distortion theory for {G}aussian multiterminal source coding
  systems with several side informations at the decoder,'' \emph{IEEE
  Transactions on Information Theory}, vol.~51, pp. 2577--2593, Jul. 2005.

\bibitem{prabharakaran-isit04}
V.~Prabhakaran, D.~Tse, and K.~Ramchandran, ``Rate region of the quadratic
  {Gaussian} {CEO} problem,'' in \emph{Proceedings of ISIT}, 2004, p. 119.

\bibitem{TavildarVW:10}
S.~Tavildar, P.~Viswanath, and A.~Wagner, ``The {G}aussian many-help-one
  distributed source coding problem,'' \emph{IEEE Transactions on Information
  Theory}, vol.~56, no.~1, pp. 564--581, Jan. 2010.

\bibitem{wagnerthesis-05}
A.~B. Wagner, ``Methods of offline distributed detection: Interacting particle
  models and information-theoretic limits,'' Ph.D. dissertation, University of
  California, Berkeley, 2005.

\bibitem{WagnerA:08}
A.~Wagner and V.~Anantharam, ``An improved outer bound for multiterminal source
  coding,'' \emph{IEEE Transactions on Information Theory}, vol.~54, no.~5, pp.
  1919--1937, May 2008.

\bibitem{Courtade:it18}
T.~Courtade, ``A strong entropy-power inequality,'' \emph{IEEE Transactions on
  Information Theory}, vol.~64, pp. 2173--2192, Apr. 2018.

\bibitem{ChenB:it08}
J.~Chen and T.~Berger, ``Successive {W}yner-{Z}iv coding scheme and its
  application to the quadratic {G}aussian {CEO} problem,'' \emph{IEEE
  Transactions on Information Theory}, vol.~54, pp. 1586--1603, Apr. 2008.

\bibitem{CourtadeW:14}
T.~Courtade and T.~Weissman, ``Multiterminal source coding under logarithmic
  loss,'' \emph{IEEE Transactions on Information Theory}, vol.~60, pp.
  740--761, Jan. 2014.

\bibitem{SanderovichSSK:it08}
A.~Sanderovich, S.~Shamai, Y.~Steinberg, and G.~Kramer, ``Communication via
  decentralized processing,'' \emph{IEEE Transactions on Information Theory},
  vol.~54, pp. 3008--3023, 2008.

\bibitem{DBLP:journals/corr/AguerriZCS17}
\BIBentryALTinterwordspacing
I.~E. Aguerri, A.~Zaidi, G.~Caire, and S.~Shamai, ``On the capacity of cloud
  radio access networks with oblivious relaying,'' \emph{CoRR}, vol.
  abs/1701.07237, 2017. [Online]. Available:
  \url{http://arxiv.org/abs/1701.07237}
\BIBentrySTDinterwordspacing

\bibitem{CoverThomas06}
T.~Cover and J.~Thomas, \emph{Elements of Information Theory}, 2nd~ed.\hskip
  1em plus 0.5em minus 0.4em\relax John Wiley and Sons, 2006.

\bibitem{dobrushintsybakov-it62}
R.~Dobrushin and B.~Tsybakov, ``Information transmission with additional
  noise,'' \emph{IEEE Transactions on Information Theory}, vol.~8, pp.
  293--304, Sep. 1962.

\bibitem{wolfziv-it70}
J.~Wolf and J.~Ziv, ``Transmission of noisy information to a noisy receiver
  with minimum distortion,'' \emph{IEEE Transactions on Information Theory},
  vol.~16, pp. 406--411, Jul. 1970.

\bibitem{Berger71}
T.~Berger, \emph{Rate Distortion Theory: A Mathematical Basis for Data
  Compression}, ser. Information and System Sciences Series.\hskip 1em plus
  0.5em minus 0.4em\relax Englewood Cliffs, NJ, USA: Prentice-Hall, 1971.

\bibitem{shannon-ire59}
C.~Shannon, ``Coding theorems for a discrete source with a fidelity
  criterion,'' \emph{IRE Convention Rec.}, vol.~7, pp. 142--163, 1959.

\bibitem{GyorgyLZ:it99}
A.~{Gy\"orgy}, T.~Linder, and K.~Zeger, ``On the rate-distortion function of
  random vectors and stationary sources with mixed distributions,'' \emph{IEEE
  Transactions on Information Theory}, vol.~45, pp. 2110--2115, Sep. 1999.

\bibitem{EswaranG:09}
K.~Eswaran and M.~Gastpar, ``Foundations of distributed source coding,'' in
  \emph{Distributed Source Coding}, P.~L. Dragotti and M.~Gastpar, Eds.\hskip
  1em plus 0.5em minus 0.4em\relax Academic Press, 2009, ch.~1, pp. 3--31.

\bibitem{gastpar:02thesis}
M.~Gastpar, ``To code or not to code,'' Ph.D. dissertation, EPFL, 2002.

\bibitem{Gastpar:08}
------, ``Uncoded transmission is exactly optimal for a simple {G}aussian
  ``sensor'' network,'' \emph{IEEE Transactions on Information Theory},
  vol.~54, no.~11, pp. 5247--5251, Nov. 2008.

\bibitem{berger-msc_lecturenotes}
T.~Berger, ``Multiterminal source coding,'' in \emph{Lecture Notes presented at
  CISM Summer School on the Information Theory Approach to Communications},
  1977.

\bibitem{tungthesis-77}
S.~Tung, ``Multiterminal source coding,'' Ph.D. dissertation, Cornell
  University, 1977.

\bibitem{housewright-77}
K.~B. Housewright, ``Source coding studies for multiterminal systems,'' Ph.D.
  dissertation, University of California, Los Angeles, 1977.

\bibitem{wyner-ic78:wynerziv}
A.~Wyner, ``The rate-distortion function for source coding with side
  information at the decoder {II}: General sources,'' \emph{Information and
  Control}, vol.~19, pp. 60--80, Jul. 1978.

\bibitem{chen-jsac03}
J.~Chen, X.~Zhang, T.~Berger, and S.~B. Wicker, ``An upper bound on the
  sum-rate distortion function and its corresponding rate allocation schemes
  for the {CEO} problem,'' \emph{IEEE Journal on Selected Areas in
  Communications: Special Issue on Sensor Networks}, pp. 1--10, 2003.

\bibitem{costa-it85}
M.~Costa, ``A new entropy power inequality,'' \emph{IEEE Transactions on
  Information Theory}, vol.~31, pp. 751--760, Nov. 1985.

\end{thebibliography}

\end{document}